\tikzstyle{none}=[inner sep=0pt] 
\tikzstyle{simplethree}=[circle,fill=black,draw=black,line width=1.500,minimum size = 2pt,inner sep = 2pt]
\tikzstyle{simple}=[circle,fill=black,draw=black,line width=1.000,minimum size = 2pt,inner sep = 2pt]
\tikzstyle{simpletoo}=[circle,fill=black,draw=black,line width=1.000,minimum size = 1pt,inner sep = 1pt]
\newtheorem{defn}{Definition}
\newtheorem{comm}{Comment}
\newtheorem{thm}{Theorem}
\newtheorem{lemma}{Lemma}
\newtheorem{remark}{Remark}
\renewcommand{\ij}{{\langle i j \rangle}}
\newcommand{\Lap}{\bigtriangleup}
\newcommand{\mrm}{\mathrm}
\newcommand{\nn}{\nonumber}
\newcommand{\pd}{\partial}
\newcommand{\lb}{\left(}
\newcommand{\rb}{\right)}
\newcommand{\lcb}{\left\{}
\newcommand{\LL}{\mathcal{L}}
\newcommand{\MM}{\mathcal{M}}
\newcommand{\NN}{\mathcal{N}}
\newcommand{\OO}{\mathcal{O}}
\newcommand{\TT}{\mathcal{T}}
\newcommand{\rcb}{\right\}}
\newcommand{\lsb}{\left[}
\newcommand{\rsb}{\right]}
\newcommand\be{\begin{equation}}
\newcommand\ba{\begin{eqnarray}}
\newcommand\ee{\end{equation}}
\newcommand\ea{\end{eqnarray}}
\newcommand\padic{$p$-adic }
\numberwithin{equation}{section}
\colorlet{dblue}{blue!70!black}
\newcommand{\arxiv}[1]
  {\href{http://arxiv.org/abs/#1}{arXiv:#1}}
\newcommand\kl{{\langle kl \rangle}}
\begin{document}
\begin{spacing}{1.3}
\begin{titlepage}

\ \\
\vspace{-2.3cm}
\begin{center}

{\LARGE{General relativity from $p$-adic strings}}

\vspace{0.5cm}
An Huang,$^{1}$ Bogdan Stoica,$^{2}$ and Shing-Tung Yau$^{3,4}$

\vspace{5mm}

{\small
\textit{
$^1$Department of Mathematics, Brandeis University, Waltham, MA 02453, USA}\\

\vspace{2mm}

\textit{$^2$Department of Physics \& Astronomy, Northwestern University, Evanston, IL 60208, USA}\\

\vspace{2mm}

\textit{
$^3$Center of Mathematical Sciences And Applications, Harvard University, Cambridge MA 02138, USA }\\

\vspace{2mm}

\textit{
$^4$Department of Mathematics, Harvard University, Cambridge MA 02138, USA}\\

\vspace{4mm}

{\tt anhuang@brandeis.edu, bstoica@northwestern.edu, yau@math.harvard.edu}

\vspace{0.3cm}
}

\end{center}

\begin{abstract}
For an arbitrary prime number $p$, we propose an action for bosonic $p$-adic strings in curved target spacetime, and show that the vacuum Einstein equations of the target are a consequence of worldsheet scaling symmetry of the quantum $p$-adic strings, similar to the ordinary bosonic strings case. It turns out that spherical vectors of unramified principal series representations of $PGL(2,\mathbb{Q}_p)$ are the plane wave modes of the bosonic fields on $p$-adic strings, and that the regularized normalization of these modes on the $p$-adic worldsheet presents peculiar features which reduce part of the computations to familiar setups in quantum field theory, while also exhibiting some new features that make loop diagrams much simpler. Assuming a certain product relation, we also observe that the adelic spectrum of the bosonic string corresponds to the nontrivial zeros of the Riemann Zeta function.
\end{abstract}

\vfill

\begin{flushleft}{\small BRX-TH-6643, Brown HET-1778}\end{flushleft}

\end{titlepage}

\setcounter{tocdepth}{2}
\tableofcontents

\section{Introduction}
\label{intro}

The \padic string amplitudes, first proposed by Freund and Olson \cite{FreundOlson}, and then subsequently investigated in \cite{FreundWitten,BFOW} and many other works, are a staple subject in \padic physics. The invariant nature of these amplitudes across number fields when expressed in integral form, as well as the adelic relations between four-point functions \cite{FreundWitten}, and the more complicated relations recently explored in \cite{Bocardo-Gaspar:2017atv}, hint at deep connections between \padic and Archimedean physics. Relatedly, it was proven in \cite{2018arXiv180101189H} that on the $B$ side of mirror symmetry, important data about period integrals at the {\it large complex structure limit points} can be described in terms of \padic data and vice versa. Indeed, in the context of the program proposed in \cite{Stoica:2018zmi}, Archimedean string theory (and Archimedean results) should generally be obtainable from \padic strings. One of the motivations of the present paper thus is to make progress toward establishing a $p$-adic $B$~model.

In the Archimedean world, it is known that the target space Einstein equations follow from imposing that the  string two-point function, computed on the worldsheet, should not RG run at the one-loop (see e.g. \cite{tong}). This condition in turn follows from imposing that the metric of the target manifold should not RG flow.

In the present paper we are interested in studying a closely related question, that fits the general program outlined in \cite{Stoica:2018zmi} of obtaining Archimedean results from \padic physics: whether the usual Archimedean target space Einstein equations follow from the target manifold metric not running at one-loop, if the worldsheet is given by a $p$-adic string, instead of an Archimedean one. It turns out the answer is affirmative, and along the way we will encounter some nontrivial facts about the eigenfunctions of the Laplace operator on the Bruhat-Tits tree. Note that we are not asking whether the target space Einstein equations follow from considering all primes, but rather from considering any one fixed prime $p$. That one prime is sufficient is a feature of this particular question. When considering more involved computations, such as the string spectrum, analyzing more than one place will be required.

What do we mean by $p$-adic string? The old results \cite{FreundOlson,FreundWitten,BFOW} on amplitudes  do not require the existence of a worldsheet or of a sigma model, but, in the present paper, a sigma model and worldsheet are precisely what we are interested in. The sigma model action for $p$-adic strings has already been proposed by Zabrodin in \cite{zabrodin,zabrodin2}, 
\be
\label{zabact}
S \sim \sum_{\ij \in E(T_p)} \eta_{ab} \lb X^a_i - X^a_j\rb \lb X^b_i - X^b_j\rb .
\ee
Here $\eta_{ab}$ is the target space Minkowski metric, the worldsheet (at genus zero) is just the Bruhat-Tits tree $T_p$, and $\ij$ is the edge between vertices $i$ and $j$ (whenever $\ij$ is used $i$ and $j$ are implied to be neighbors).  Action \eqref{zabact} is thus analogous to the Polyakov action.

The Zabrodin action is defined when the target space is flat. In order to access the Ricci tensor of the target space, we need a suitable generalization to curved target space.  We propose that this generalization is given by
\be
\label{eq12}
S = \sum_{\ij \in E(T_p)} \frac{d^2(X_i,X_j)}{ V a^2_\ij},
\ee
so that the Zabrodin action \eqref{zabact} can be naturally understood as the long wavelength limit of Eq. \eqref{eq12}. Note that the sum in action \eqref{eq12} is naively divergent; much of this paper will be devoted to discussing its regularization.

We take Eq. \eqref{eq12} above as the definition of the \padic string action. In this expression $d(X_i,X_j)$ is the target space distance between points $X_i$, $X_j$, quantity $a_{\ij}$ is length of edge $\ij$ in the tree, and $V$ is the degree of any vertex (so that $V=p+1$ for the tree corresponding to $\mrm{SL}_2(\mathbb{Q}_p)$). The action is thus just a distance squared, and it is reminiscent of the Nambu-Goto action.

The plan for the rest of the paper is as follows. In Section \ref{planewaves}, we construct the momentum eigenfunctions on the Bruhat-Tits tree, discuss their normalization, and present a certain symmetry structure with which we equip the tree. We also briefly comment on the connection between the adelic string spectrum and the nontrivial zeros of the Riemann Zeta function. In Section \ref{stringsfromdistance}, we introduce the $p$-adic string action, and explain how it reduces to a Polyakov-like form in the weak target space curvature limit. Finally, in Section \ref{pathintsec} we compute the tree level and one loop correction to the two-point function. The momentum eigenfunction normalization is proven in Appendix \ref{appA}, and the vanishing of contributions from the real line to the tree level two-point function is shown in Appendix \ref{appvanishing}.

\section{Plane waves on the tree}
\label{planewaves}

The trees we are interested in are the Bruhat-Tits buildings for $\mrm{SL}_2\lb \mathbb{K}\rb$, with $\mathbb{K}$ a finite extension of $\mathbb{Q}_p$. In the rest of the paper we will only be concerned with the case $\mathbb{K}=\mathbb{Q}_p$; our results can be straightforwardly generalized to nontrivial finite extensions. For a more mathematical introduction to trees, see e.g. \cite{casselman,AbrBrown}. 

Combinatorially, the Bruhat-Tits tree $T_p$ for $\mrm{SL}_2\lb \mathbb{Q}_p\rb$ is an infinite tree of uniform degree  $p+1$, as in Figure \ref{bttree}. This space is congruent to $\mrm{PGL}\lb2, \mathbb{Q}_p \rb/\mrm{PGL}\lb2, \mathbb{Z}_p \rb$, just as in the Archimedean case the Poincar\'e disk is congruent to $\mrm{PGL}\lb 2,\mathbb{R}\rb/\mrm{PSO}(2)$ ($\mrm{PGL}\lb2, \mathbb{Z}_p \rb$ and $\mrm{PSO}(2)$ are the maximal compact subgroups in each case, respectively).

As a physical space, the Bruhat-Tits tree can play two roles: (1) non-Archimedean string worldsheet, and (2) non-Archimedean analogue of $\mrm{AdS}_2$. In the present paper, we will only be concerned with the tree as string worldsheet. Discussions of the tree as $\mrm{AdS}_2$ can be found in \cite{Gubser:2016guj,Heydeman:2016ldy,Gubser:2016htz,Stoica:2018zmi}. Note, however, that unlike in the Archimedean case, in the \padic world the distinction between string worldsheet and $\mrm{AdS}_2$ is not as clear-cut.

\textbf{Notation:} The boundary $\pd T_p$ of the tree can be identified with $\mathbb{P}^1\lb \mathbb{Q}_p \rb$. We denote vertices in the bulk of the tree $T_p$ by Latin indices $i,j,\dots$, and vertices on the boundary $\pd T_p$ by Greek indices $\alpha,\beta,\dots$. If $i$ and $j$ are neighboring vertices we write $i\sim j$.

\begin{figure}[t]
\input{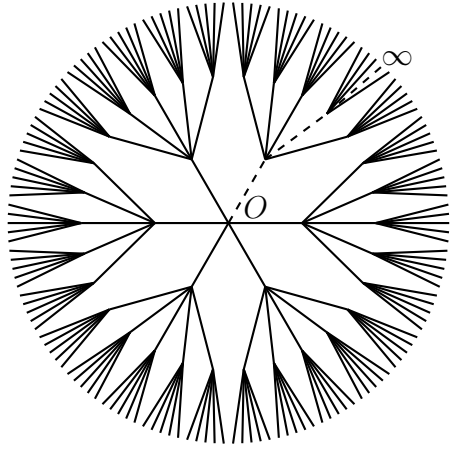}
\caption{Bruhat-Tits tree for $p=5$. We have fixed a center $O$ and the point at infinity $\infty$. These two points determine a half-geodesic, pictured here with dashed line segments. The stabilizer of $\lcb O, \infty \rcb$ is $B\cap \mrm{PGL}\lb2, \mathbb{Z}_p\rb$.}
\label{bttree}
\end{figure}

\subsection{The tree as lattice equivalence classes}

In this subsection we will prove some technical results that will be needed in the rest of the paper. These technical results originate from the well-known fact that the vertices in the Bruhat-Tits tree are lattice equivalence classes for $\mathbb{Q}_p^2$ \cite{casselman,goren}.

\begin{defn}
A lattice $\LL$ for $\mathbb{Q}_p^2$ is
\be
\LL = \mathbb{Z}_p v_1 + \mathbb{Z}_p v_2,
\ee
for two independent vectors $v_{1,2}\in\mathbb{Q}_p$. We will also denote lattice $\LL$ as $\lb v_1,v_2\rb.$
\end{defn}

\begin{defn}
Two lattices $\LL_{1,2}$ are equivalent if $\LL_1 = \alpha \LL_2$, for $\alpha\in\mathbb{Q}_p^\times$. We denote the equivalence class of lattice $\LL$ by $[\LL]$.
\end{defn}

\begin{comm}
\label{commm1}
Two lattice equivalence classes $[\LL_{1,2}]$ are neighbors in the tree if
\be
\LL_1 \supset \LL_2 \supset p \LL_1.
\ee
This definition is reflexive. Equivalently, the $p+1$ neighbors of the lattice spanned by vectors $\lb v_1,v_2\rb$ are lattices $\lb v_1 , p v_2 \rb$ and $\lb p v_1 , i v_1 +v_2 \rb$, with $i\in\{0,\dots,p-1\}$.
\end{comm}

\begin{comm}
A matrix $M\in \mrm{PGL}\lb2, \mathbb{Q}_p\rb$ acts on lattice $\LL=\lb v_1,v_2 \rb$ as
\be
\lb v_1,v_2\rb \to \lb M v_1, M v_2 \rb.
\ee
\end{comm}

In order to describe waves propagating on the tree, we will need to fix a center point $O$ for the tree and a ``point at infinity'' $\infty$ on $\pd T_p$. We take the lattice representations of $O$ and $\infty$ to be
\be
\LL_O = \begin{pmatrix}
	1 & 0\\
	0 & 1\\
\end{pmatrix}, \quad 
\LL_\infty = \begin{pmatrix}
	1 & 0\\
	0 & 0\\
\end{pmatrix}.
\ee 
We will be interested in the stabilizer subgroup $S$ of $\mrm{PGL}\lb2, \mathbb{Q}_p \rb$ which fixes $O$ and $\infty$. This is characterized in Lemma \ref{stablemma} below.

\begin{lemma}
\label{stablemma}
The stabilizer $S$ of $\lcb O,\infty \rcb$ is
\be \label{local symmetry group}
S \lb O,\infty \rb = B \cap \mrm{PGL}\lb2, \mathbb{Z}_p \rb,
\ee
with $B$ the standard Borel subgroup of upper triangular matrices.
\end{lemma} 

\begin{proof}
The stabilizer of $O$ is $\mrm{PGL}\lb2, \mathbb{Z}_p \rb$. The stabilizer of $\infty$ consists of upper-triangular matrices.
\end{proof}
We will also need the following lemma.

\begin{lemma}
\label{lemmalat}
Any lattice equivalence class $[\LL]$ has a representative by an upper-triangular matrix in the standard Borel.
\end{lemma}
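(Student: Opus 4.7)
The statement is essentially the Iwasawa decomposition $\mathrm{GL}_2(\mathbb{Q}_p)=B(\mathbb{Q}_p)\cdot \mathrm{GL}_2(\mathbb{Z}_p)$ applied to the matrix that represents the lattice. Write $\LL=(v_1,v_2)$ as a matrix $M=(v_1\mid v_2)\in \mathrm{GL}_2(\mathbb{Q}_p)$. The key observation is that right multiplication by any $k\in \mathrm{GL}_2(\mathbb{Z}_p)$ amounts to changing the $\mathbb{Z}_p$-basis of $\LL$, so $Mk$ represents exactly the same lattice, not merely an equivalent one; in particular, we do not even need to exploit the scaling freedom in $[\LL]$. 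Given the Iwasawa decomposition, writing $M=bk$ with $b\in B$, $k\in\mathrm{GL}_2(\mathbb{Z}_p)$ immediately gives $Mk^{-1}=b$ as an upper-triangular representative of $[\LL]$.

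If instead one prefers a direct elementary argument (which is how I would carry it out to make the lemma self-contained), I would proceed by column operations over $\mathbb{Z}_p$. Write $v_i=(a_i,b_i)^T$. Since $v_1,v_2$ are $\mathbb{Q}_p$-linearly independent, $(b_1,b_2)\neq (0,0)$. After swapping the two columns if necessary (which is right multiplication by the permutation matrix $\bigl(\begin{smallmatrix} 0 & 1\\ 1 & 0\end{smallmatrix}\bigr)\in\mathrm{GL}_2(\mathbb{Z}_p)$), one may assume $v_p(b_2)\le v_p(b_1)$, so that $b_1/b_2\in\mathbb{Z}_p$. Replacing $v_1$ by $v_1-(b_1/b_2)v_2$ is another $\mathrm{GL}_2(\mathbb{Z}_p)$-operation (an elementary $\mathbb{Z}_p$-column transvection), and the new first column has second coordinate $b_1-(b_1/b_2)b_2=0$. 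The transformed matrix is therefore upper triangular and represents the same lattice.

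The only point that requires a little care is the convention: upper triangular means the $(2,1)$ entry vanishes, i.e.\ the \emph{first} column has the form $(\ast,0)^T$, so one must clear the correct entry and, if needed, permute columns at the end so that the cleared column is placed first. I do not anticipate any substantive obstacle — the lemma is really a restatement of the Iwasawa/Hermite normal form for $\mathrm{GL}_2$ over $\mathbb{Z}_p$, and since we have the full $\mathrm{GL}_2(\mathbb{Z}_p)$ (not merely $\mathrm{SL}_2$) available through the change-of-basis action on $\LL$, no unit-normalization issue arises.
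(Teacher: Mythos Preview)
Your proposal is correct and is essentially the same argument as the paper's: the paper also right-multiplies the matrix $\bigl(\begin{smallmatrix} a & b\\ c & d\end{smallmatrix}\bigr)$ by an element of $\mathrm{GL}_2(\mathbb{Z}_p)$---a lower-triangular transvection when $v(c)\ge v(d)$, or a combined swap-plus-transvection when $v(c)<v(d)$---to kill the $(2,1)$ entry, which is exactly your column-operation argument with $c=b_1$, $d=b_2$. Your additional remark that this is just the Iwasawa decomposition $\mathrm{GL}_2(\mathbb{Q}_p)=B\cdot \mathrm{GL}_2(\mathbb{Z}_p)$ is a nice conceptual gloss the paper does not make explicit.
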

\begin{proof}
Suppose a representative of $[\LL]$ is
\be
\begin{pmatrix} 
a & b \\
c & d 
\end{pmatrix}, \quad a,b,c,d \in \mathbb{Q}_p.
\ee
If $v(c) \geq v(d)$, 
\be
\begin{pmatrix} 
a & b \\
c & d 
\end{pmatrix}
\begin{pmatrix} 
1 & 0 \\
-cd^{-1} & 1 
\end{pmatrix} =
\begin{pmatrix} 
a-bcd^{-1} & b \\
0 & d 
\end{pmatrix}
\ee
gives an upper-triangular representative. If $v(c) < v(d)$, the representative is instead given by
\be
\begin{pmatrix} 
a & b \\
c & d 
\end{pmatrix}
\begin{pmatrix} 
-\frac{d}{c} & 1 \\
1 & 0 
\end{pmatrix} =
\begin{pmatrix} 
b-adc^{-1} & a \\
0 & c 
\end{pmatrix}.\qedhere
\ee
\end{proof}

\subsection{Eigenfunctions of the Laplacian}

Because propagation of strings on the worldsheet is intimately connected to the Laplacian, we need to study the Laplacian equation on the Bruhat-Tits tree $T_p$, for functions
\be
\phi: V(T_p) \to \mathbb{C}.
\ee

\begin{defn}
The Hecke operator $\TT$ on the tree acts at vertex $i$ as
\ba
\TT \phi(i) = \sum_{j\sim i} \phi(j).
\ea
\end{defn}
\begin{defn}
The Laplacian operator is
\be
\Lap \coloneqq \TT - p -1,
\ee
acting at vertex $i$ as
\be
\Lap\! \phi (i) = - (p+1) \phi(i) + \sum_{j\sim i} \phi(j).
\ee
\end{defn}

\begin{comm}
The convention of the Laplacian we use here is different from the usual convention in spectral graph theory by a minus sign.
\end{comm}

To describe strings on the worldsheet, we need an eigenbasis for the momentum operator. Such a basis was constructed by Zabrodin \cite{zabrodin,zabrodin2}, as an eigenbasis for the Laplacian operator. We reproduce this construction in Lemma \ref{thm1} below, but in order to do so we will need to recall some more definitions, namely the path overlap function $\delta$ and the propagation bracket $\left\langle\cdot,\cdot\right\rangle$.

\begin{comm}
In order to write down Laplacian eigenfunctions on the Bruhat-Tits tree we also need to fix a ``center vertex'' (or base point) $O$, akin to specifying the origin of the coordinate system in Archimedean $\mrm{AdS}_2$. Fixing this center is deeply connected to renormalization of the plane waves.
\end{comm}

We will come back to the relation between specifying the center and renormalization in Comment \ref{comm3}, after constructing the Laplacian eigenfunctions. Fixing the center vertex $O$ is also necessary in order to construct the state space at $O$; we will come back to this point in Section \ref{statelabel}.

\begin{defn}[Path overlap function]
If $i$, $j$ are vertices in the tree (or on the boundary), then $i\to j$ represents the unique \emph{oriented} path in the tree from $i$ to $j$, and $\ell\lb i,j \rb = \ell\lb j,i \rb$
is the number of edges between $i$ and $j$. Furthermore,
\be
\delta\lb i\to j, k\to l \rb
\ee
is the signed length of the common part of paths $i\to j$ and $k \to l$.
\end{defn}
\begin{defn}[Propagation bracket]
With base point $O$, we introduce
\be
\label{propbrak}
\langle i,j \rangle \coloneqq \delta \lb O\to j, O \to i\rb + \delta\lb i \to j, O \to i \rb.
\ee
\end{defn}

Let's now define the momentum operator $P_\alpha$. This operator acts as translation by one in the direction toward a marked boundary vertex $\alpha$.

\begin{defn}
The momentum operator $P_\alpha$ acts on functions $\phi:V\lb T_p \rb\to \mathbb{C}$ as
\be
P_\alpha \phi(i) = \phi(j),
\ee
where $j$ is the neighbor of $i$ in the direction of increasing propagation bracket $\langle\cdot,\alpha\rangle$.
\end{defn}

The Hecke eigenfunction equation at vertex $j$ is
\be
\label{heckeeigen}
\TT \phi(j) = \lambda \phi(j).
\ee

\begin{lemma}[Zabrodin]
\label{thm1}
There are eigenfunctions $\phi_{\mu,\alpha}$ for the Hecke operator with eigenvalues $\lambda_\mu$, given as follows: 

\ba
\label{eigenfctplane}
\phi_{\mu,\alpha}(i) &=& p^{\mu\langle i,\alpha \rangle}, \\
\label{eq219spec}
\lambda_\mu &=& p^\mu + p^{1-\mu},
\ea
where $\mu\in\mathbb{C}$ is an arbitrary parameter, and $\alpha\in\partial T_p$ is any boundary point.
\end{lemma}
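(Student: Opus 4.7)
The strategy is to compute $\TT\phi_{\mu,\alpha}(i)$ directly from the definition and verify that it equals $\lambda_\mu \phi_{\mu,\alpha}(i)$. Since $\TT\phi(i) = \sum_{j\sim i}\phi(j)$ is a sum over the $p+1$ neighbors, everything reduces to understanding how the propagation bracket $\langle j,\alpha\rangle$ varies as $j$ ranges over the neighbors of $i$. The crucial combinatorial claim I would establish is that, among the $p+1$ neighbors of $i$, exactly one satisfies $\langle j,\alpha\rangle = \langle i,\alpha\rangle + 1$ (namely the unique neighbor lying one step closer to $\alpha$ along the unique geodesic $i\to\alpha$), and the remaining $p$ satisfy $\langle j,\alpha\rangle = \langle i,\alpha\rangle -1$. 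Granted this, the Hecke sum telescopes to
\be
\TT\phi_{\mu,\alpha}(i) = p^{\mu(\langle i,\alpha\rangle+1)} + p\,p^{\mu(\langle i,\alpha\rangle-1)} = \lb p^\mu + p^{1-\mu}\rb \phi_{\mu,\alpha}(i),
\ee
which is the desired eigenvalue relation.

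To prove the combinatorial claim I would unpack the definition $\langle i,\alpha\rangle = \delta(O\to \alpha, O\to i) + \delta(i\to \alpha, O\to i)$ geometrically. Let $m$ denote the unique vertex at which the path $O\to i$ first departs from the half-geodesic $O\to \alpha$; then the first term contributes $+\ell(O,m)$ (the two oriented paths agree on $O\to m$), while the second term contributes $-\ell(m,i)$ (the paths $i\to\alpha$ and $O\to i$ traverse the segment between $m$ and $i$ in opposite directions). Thus $\langle i,\alpha\rangle = \ell(O,m) - \ell(m,i)$, a formula that makes the claim transparent: moving one step along the edge toward $\alpha$ either lengthens the common segment with the geodesic (if $i$ lies on it) or shortens the excursion away from it, increasing $\langle\cdot,\alpha\rangle$ by $1$ in either case, while moving along any of the other $p$ edges does the opposite.

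To finish, I would verify this step-wise behavior by a short case analysis: (i) $i$ lies on the half-geodesic, so its unique neighbor at greater depth lies toward $\alpha$ and all other neighbors are off the geodesic at branch point $i$; (ii) $i$ lies off the half-geodesic, so its unique neighbor reducing the distance to the branch point $m$ is the one toward $\alpha$ and the remaining $p$ increase that distance. In each case one reads off $\ell(O,m_j)$ and $\ell(m_j, j)$ for the new branch point $m_j$ of each neighbor $j$, and confirms $\langle j,\alpha\rangle = \langle i,\alpha\rangle \pm 1$ with the stated multiplicities.

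The main obstacle is purely bookkeeping: correctly interpreting the signed overlap $\delta$ in the definition of $\langle\cdot,\alpha\rangle$ and treating the boundary cases $i=O$ and $i$ on the half-geodesic versus off it uniformly. Once the geometric formula $\langle i,\alpha\rangle = \ell(O,m) - \ell(m,i)$ is in hand, the $p+1$ neighbors split cleanly into one plus $p$ and the eigenvalue computation is immediate; no analytic input beyond the definition of $\TT$ is required.
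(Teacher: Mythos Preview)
Your proposal is correct and follows essentially the same route as the paper. The paper also first derives the explicit formula $\langle i,\alpha\rangle = \ell(O,i') - \ell(i',i)$ (your $m$ is their $i'$, the median of $O,i,\alpha$), and then verifies the Hecke eigenequation by a case analysis on the relative positions of $i$, $O$, $\alpha$; your packaging of that case analysis as the single claim ``one neighbor has bracket $+1$, the other $p$ have bracket $-1$'' is a slightly cleaner way to organize the same verification.
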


It is easiest to verify Lemma \ref{thm1} using the following Lemma \ref{lemma1}.

\begin{lemma}
\label{lemma1}
Let $\alpha$ be any boundary point. We have
\be
\label{langleeq}
\langle i,\alpha \rangle = \begin{cases}
-\ell(i,O) ,\qquad \qquad \mrm{if\ } \alpha,O,i \mrm{\ are\ on\ the\ same\ geodesic, \ in\ this\ order}\\
\ell(O,i') - \ell(i',i) \quad \ \mrm{if\ not}
\end{cases}.
\ee
Here $i'$ (not necessarily distinct from $i$) is the vertex at which the $\alpha \to O$ and $\alpha \to i$ paths stop coinciding. 
\end{lemma}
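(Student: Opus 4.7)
The plan is to simply unfold the definition of the propagation bracket,
\[
\langle i,\alpha \rangle = \delta\bigl(O\to\alpha,\, O\to i\bigr) + \delta\bigl(i\to\alpha,\, O\to i\bigr),
\]
and evaluate each of the two signed overlaps in each of the two geometric cases. The geometric input I would use throughout is the standard fact about the Bruhat--Tits tree that any two vertices (or boundary points) are joined by a unique geodesic, and that two oriented paths in the tree share a connected (possibly empty) segment, whose signed length is $+\ell$ if the two paths traverse it in the same direction and $-\ell$ if in opposite directions.

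First I would handle the collinear case where $\alpha,O,i$ sit on a common geodesic in that order. Here the geodesics $O\to\alpha$ and $O\to i$ leave $O$ in opposite directions, so they share only the vertex $O$ and no edges, giving $\delta(O\to\alpha,O\to i)=0$. On the other hand, because $O$ lies on the geodesic from $i$ to $\alpha$, the path $i\to\alpha$ traverses the entire segment from $i$ to $O$, and it does so in the direction $i\to O$, opposite to the direction $O\to i$ used by the second path. Thus the common part has length $\ell(i,O)$ with a minus sign, so $\delta(i\to\alpha,O\to i)=-\ell(i,O)$, and adding the two contributions gives the claimed $-\ell(i,O)$.

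Next I would treat the generic case in which the geodesics $\alpha\to O$ and $\alpha\to i$ agree up to some fork vertex $i'$ and then bifurcate. By definition $i'$ lies on both $O\to\alpha$ and $O\to i$, and both paths agree from $O$ up to $i'$ (and then go on to $\alpha$ or $i$ respectively), so $\delta(O\to\alpha,O\to i)=+\ell(O,i')$. For the second overlap, the path $O\to i$ decomposes as $O\to i'$ followed by $i'\to i$, while the path $i\to \alpha$ decomposes as $i\to i'$ followed by $i'\to\alpha$; after $i'$ the two paths diverge, and before $i'$ they traverse the same $i$-to-$i'$ segment in opposite directions, so $\delta(i\to\alpha,O\to i)=-\ell(i',i)$. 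Summing gives $\langle i,\alpha\rangle=\ell(O,i')-\ell(i',i)$, as claimed.

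The only genuinely nontrivial step is the bookkeeping in the second case: one must check that in the tree there really is a single fork vertex $i'$ at which all three relevant geodesics ($O$-$\alpha$, $O$-$i$, and $i$-$\alpha$) meet, and that the shared segments between the oriented pairs of paths are exactly as claimed. This follows from the uniqueness of geodesics in a tree: the three geodesics determine a tripod with central vertex $i'$, and the orientations then immediately dictate the signs. Once this tripod picture is in hand the computation is mechanical, so I expect this geometric lemma about the fork vertex to be the only step requiring any care.
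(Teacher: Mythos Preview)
Your argument is correct and is precisely the direct unfolding of the definition that the paper has in mind; indeed the paper does not spell out a proof at all, stating only that ``the proof of Lemma~\ref{lemma1} is immediate.'' Your case analysis (signed overlap vanishes versus equals $\pm\ell$ according to whether the shared segment is traversed co- or counter-directionally, with the tripod/median vertex $i'$ organizing the second case) is exactly what ``immediate'' means here, so there is nothing to add.
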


The proof of Lemma \ref{lemma1} is immediate. Then, to prove Lemma \ref{thm1}, there are four cases to consider, as in Figure \ref{fig1}. In each case, using expression \eqref{langleeq}, the eigenfunction equation can be checked explicitly.

\begin{figure}[htp]
\centering
\centering
\begin{tikzpicture}[scale=0.7]
	\begin{pgfonlayer}{nodelayer}
		\node [style=simpletoo] (0) at (-5, -0) {};
		\node [style=simpletoo] (1) at (-4, 2) {};
		\node [style=simpletoo] (2) at (-3, -0) {};
		\node [style=simpletoo] (3) at (-1, 4) {};
		\node [style=simpletoo] (4) at (-3, 4) {};
		\node [style=simpletoo] (5) at (-1, -0) {};
		\node [style=simpletoo] (6) at (1, 4) {};
		\node [style=simpletoo] (7) at (0, 2) {};
		\node [style=simpletoo] (8) at (1, -0) {};
		\node [style=simpletoo] (9) at (3, 4) {};
		\node [style=simpletoo] (10) at (2, 2) {};
		\node [style=simpletoo] (11) at (3, -0) {};
		\node [style=none] (12) at (-5, -0.5) {$\infty$};
		\node [style=none] (13) at (-3, -0.5) {$\infty$};
		\node [style=none] (14) at (-1, -0.5) {$\infty$};
		\node [style=none] (15) at (1, -0.5) {$\infty$};
		\node [style=none] (16) at (3, -0.5) {$i$};
		\node [style=none] (17) at (-3, 4.5) {$i$};
		\node [style=none] (18) at (-1, 4.5) {$i=O$};
		\node [style=none] (19) at (1, 4.5) {$O$};
		\node [style=none] (20) at (3, 4.5) {$O$};
		\node [style=none] (21) at (-4.5, 2) {$O$};
		\node [style=none] (22) at (-0.5, 2) {$i$};
		\node [style=none] (23) at (2.5, 2) {$i'$};
	\end{pgfonlayer}
	\begin{pgfonlayer}{edgelayer}
		\draw [style=simple] (0.center) to (1.center);
		\draw [style=simple] (1.center) to (4.center);
		\draw [style=simple] (2.center) to (3.center);
		\draw [style=simple] (5.center) to (7.center);
		\draw [style=simple] (7.center) to (6.center);
		\draw [style=simple] (8.center) to (10.center);
		\draw [style=simple] (10.center) to (9.center);
		\draw [style=simple] (10.center) to (11.center);
	\end{pgfonlayer}
\end{tikzpicture}
\caption{All possible configurations of points $\infty$, $i$, and $O$ for the proof of Lemma~\ref{thm1}.}
\label{fig1}
\end{figure}
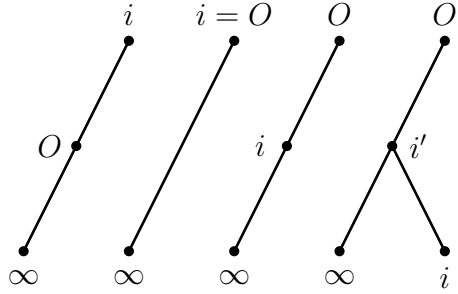

The Laplacian eigenfunctions \eqref{eigenfctplane} are momentum eigenfunctions, as we now explain.

\begin{comm} \label{eigen}
The momentum operator eigenequation is 
\be
P_\alpha \phi_i = \kappa \phi_i = \phi_j,
\ee
with $j$ the neighbor of $i$ in the direction of increasing propagation bracket $\langle\cdot,\alpha\rangle$, and $\kappa\in\mathbb{C}^*$ ($\kappa$ should be nonzero, as $P_\alpha$ is a discrete momentum). The unique solution to this equation that is invariant under the local symmetry group \eqref{local symmetry group} is, up to a multiplicative constant,
\be 
\label{momentumeigenfunction}
\phi_{\mu,\alpha}(i) = p^{\mu \langle i,\alpha\rangle},
\ee
with $\kappa\eqqcolon p^{\mu}$. The Laplacian eigenfunctions \eqref{eigenfctplane} thus are momentum eigenfunctions.
\end{comm}

\begin{comm}
\label{comment6}
The transformation $\mu\to1-\mu$ is a symmetry of the eigenvalues $\lambda_\mu$. Furthermore, the eigenvalues are real iff
\be
\Re \lb \mu\rb = \frac{1}{2} \qquad \mrm{or} \qquad \Im\lb \mu \rb = k\pi/\log{p},\ k\in\mathbb{Z}.
\ee
The different values of $\mu$ of the form $\mu = \Re \lb \mu\rb + i k \pi /\log p$, for fixed real part $\Re \lb\mu\rb$ and any even integer $k$, all correspond to the same eigenfunction of the Laplacian. We will thus mod out by this artificial reparameterization symmetry for even $k$ when computing the two-point functions in Section \ref{pathintsec}.
\end{comm}

\begin{comm}
In this paper we are mostly interested in analyzing one finite place $p$. However, when putting all places together in an adelic structure, we will demand the Archimedean string worldsheet mass squared $m_\infty^2$ to arise from the eigenmodes at places $p$ via the product formula
\be
\label{eqcomm224}
\lb 2\pi \rb^2 m^2_{(\infty)} = \prod_p \frac{1}{p^{-1}\lb 1-p^\mu\rb\lb1-p^{1-\mu}\rb}.
\ee
Here the right-hand side comes from producting the negative of the Bruhat-Tits tree Laplacian spectrum given by $-p-1+\lambda_\mu$, with $\lambda_\mu$ in Eq. \eqref{eq219spec}, and we have divided by a factor of $p$ on the denominator to account for the nontrivial weights (dimensions) of the mass squared: i.e. on the $p$-adic worldsheet, the spatial unit circle has $p$ points, thus has $p$-adic length equal to $p^{-1}$, while the time unit circle has $1$ point with $p$-adic length equal to 1. (Also note that the product of $p$ as defined by analytic continuation, equals $\lb2\pi\rb^2$ on the left-hand side, which is the product of two dimensionful unit circles on the Archimedean side.)

By analytic continuation, (see \eqref{prescription8} and the explanation afterwards about the regularization) the right-hand side of Eq. \eqref{eqcomm224} equals
\be
\zeta_{\lb \infty \rb}\lb \mu\rb \zeta_{\lb \infty \rb}\lb 1-\mu \rb,
\ee
with $\zeta_{\lb \infty \rb}$ the Riemann Zeta function. Thus, in order for the Archimedean mass $m_{(\infty)}^2$ to equal zero, $m_{(\infty)}^2=0$, we must sit at a nontrivial Riemann zero. The trivial Riemann zeros do not appear, due to the regularization in \eqref{prescription8} below, or due to appendix \eqref{appvanishing}. Conversely, if $\mu$ is a Riemann zero on the critical line, Theorem \eqref{thmnorm} implies that it does give rise to a $p$-adic string state with nonzero norm and positive real Laplacian eigenvalue (in our conventions the Laplacian is negative semi-definite), and it does contribute to the $p$-adic string path integral as shown in section \eqref{one loop}. On the other hand, for any nonzero $\mu$ that does not lie on the critical line, a similar calculation as in appendix \eqref{appvanishing} indicates that it does not contribute to the $p$-adic path integral. There are indications that the $\mu=0$ mode may be related to the tachyon. We plan to investigate these issues in detail in a separate paper. 
\end{comm}

\begin{comm}
\label{commm8}
The oscillation modes on the line $\Im\lb \mu \rb = \pi/\log{p}$ will not pair correctly in order to contribute to the $p$-adic string path integrals in Section \ref{pathintsec}, except at the point $\Re\lb\mu\rb=0$. However, at this point the mode $p^{\mu\langle i,\infty\rangle}$ takes alternating values in $\lcb \pm 1\rcb$, and does not have a well-defined boundary value as $i\to\infty$. All other modes have well-defined asymptotic boundary conditions, and in fact the position of the boundary can be defined in terms of this asymptotic behavior.  For the $\Im\lb \mu \rb = i\pi/\log{p}$ mode there is no well-defined boundary, thus we will exclude this mode from discussion in the rest of the paper, i.e. exclude the case when $k$ is odd in Comment \ref{comment6}.
\end{comm}

\begin{comm}
\label{comm3}
Eigenfunctions \eqref{eigenfctplane} are plane waves. This follows from the fact that the sign choices in Eqs. \eqref{langleeq} correspond to the exponent incrementing as the wave propagates on the graph, starting from the boundary point $\alpha$. However, since the graph is infinite, the distance from any point in the bulk to $\alpha$ is divergent. To remove this divergence, the steps in the exponent are instead counted in reference to the base point $O$. Thus, choosing a base point is related to renormalization.
\end{comm}

We are thus led to the following result, that principal series representations of $\mrm{PGL}\lb2,\mathbb{Q}_p\rb$ are plane waves on the Bruhat-Tits tree.

\begin{remark}
Equation \eqref{eigenfctplane}, when $\Re(\mu)=1/2$, has an interpretation as a $p$-adic automorphic form: it is the unique spherical vector (up to a nonzero multiple) of the unitary principal series representation of $\mrm{PGL}(2,\mathbb{Q}_p)$.
\end{remark}

More generally, we expect certain $p$-adic automorphic forms to describe fields on the $p$-adic string worldsheet.

If eigenfunctions \eqref{eigenfctplane} are Bruhat-Tits tree plane waves, they must satisfy orthonormality conditions analogous to the Archimedean ones. This is captured by Definition \ref{defip} and Theorem \ref{thmnorm} below. However, before discussing orthonormality, we need to introduce the regularization scheme that we will employ. This regularization scheme is necessary because many of the sums on the tree that we will encounter are naively divergent.

\begin{defn}
\label{prescription8}
Our regularization prescription for the evaluation of a meromorphic function $f$ at a point $x$ is as follows: expand the function $f$ in a Laurent expansion at point $x$ (which is allowed to be a pole), and pick out the constant term. This regularization and its variants are used often in physics.

We will mostly be interested in applying this prescription to geometric sums:
\begin{enumerate}
\item If $ s >0$, the sum
\be
\label{eqq2222}
\sum_{k=1}^\infty p^{-ks}= \frac{1}{p^s-1}
\ee
is finite, so the constant term is just the value of the sum.
\item If $s<0$, writing
\be
\label{eqq22222}
\sum_{k=1}^\Lambda p^{-ks}= \frac{1}{p^s-1} - \frac{p^{-s\Lambda}}{p^s-1},
\ee
for a large cutoff $\Lambda$, has a pole in $p^{s\Lambda}$ in the second term as $\Lambda\to\infty$, which gets discarded.
\item If $\Re\lb s\rb=0$ and $\Im \lb s \rb \neq 2ki/\log p$, $k\in\mathbb{Z}$, we can write $s=\epsilon + i \Im\lb s\rb$ and take the limit $\epsilon \to 0$ in Eqs. \eqref{eqq2222} or \eqref{eqq22222}, to obtain
\be
\sum_{k=1}^\infty p^{-ks}= \frac{1}{p^s-1}.
\ee
\item If $s=2ki/\log p$, $k\in\mathbb{Z}$, we can expand the right-hand side in Eq. \eqref{eqq2222} as 
\be
\frac{1}{p^s-1} = \frac{1}{s \log (p)}-\frac{1}{2} + \OO\lb s \rb,
\ee
which contains a log singularity in $p^s$ as $s\to 2ki/\log p$. The constant term in this case~is~$-1/2$.
\end{enumerate}
\end{defn}
More details on this method of regularization can be found in \cite{renreb}.

For geometric sums, the regularization scheme above coincides with analytic continuation, for which the geometric sum
\be
\sum_{k=1}^\infty p^{-ks}= \frac{1}{p^s-1} , \quad \Re(s) >0,
\ee
gets analytically extended to the entire complex plane, minus the poles at $s=2ki/\log p$, $k\in\mathbb{Z}$. Furthermore, at these poles the geometric series is the divergent sum $1+1+1+\dots$, and can also be thought of as being regularized as $\zeta(0)=-1/2$, with $\zeta$ the Riemann Zeta function. We are thus led to the following form of our regularization.
\begin{comm}
\label{def6}
The geometric series is regularized by the prescription in Definition \ref{prescription8} (or by analytic continuation) as
\be
\label{reghere}
\sum_{j=1}^\infty p^{-js}= \begin{cases} 
\frac{1}{p^s-1} , \quad s \neq \frac{2\pi ik}{\log p}\\
-\frac{1}{2}, \quad\  s = \frac{2\pi ik}{\log p}
\end{cases},
\ee
with $k\in\mathbb{Z}$.
\end{comm}
Analytic continuation \eqref{reghere} is very much in the spirit of the analytic continuation used in defining special functions, e.g. Gamma, Beta and Zeta functions in the Archimedean case. The connection with special functions is in fact stronger: in the $p$-adic case, the Gel'fand-Graev Gamma function and the Beta function are defined precisely via analytic continuation \eqref{reghere} (see e.g. \cite{GGIP}). Prescription \eqref{reghere} can be used as a regularization scheme in other contexts also, however we will not detail this in the present paper.

We now introduce the inner product for functions defined at the vertices of the tree. As mentioned above, this inner product is in all cases divergent, and needs to be regularized using the prescription in Definition \ref{prescription8} and Eq. \eqref{reghere}.

\begin{defn}
\label{defip}
The inner product of functions $\psi,\phi:V(T_p)\to\mathbb{C}$ is
\be
\label{innerp}
\langle \phi | \psi \rangle \coloneqq \sum_{i\in V(T_p)} \phi^*_i \psi_i,
\ee
with the star denoting conjugation, defined to act on eigenfunctions as
\be
\phi_{\mu,\alpha}^* \coloneqq \phi_{1-\mu,\alpha}.
\ee
\end{defn}

On the critical line this definition of conjugation is the same as complex conjugation. It is also related to time reversal symmetry for a certain notion of time, as we will explain in Section \ref{fixedbvert}.

\begin{thm}
\label{thmnorm}
With regularization \eqref{reghere}, the eigenvalues $\phi_{\mu,\alpha}$ satisfy the normalization condition
\be
\label{hereisnorm0}
\langle\phi_{\mu,\alpha} | \phi_{\nu,\beta} \rangle =  \frac{1}{1-p} \lb \delta_{\mu^*+\nu} + \delta_{\mu^*+\nu-1}\rb \delta_{\alpha,\beta} + \frac{p^{\ell(O,O')}-1}{p-1} \delta_{\mu^*+\nu-1},
\ee
with $O'$ the vertex on the $\alpha \leftrightarrow \beta$ geodesic closest to $O$ ($O'=O$ if $O$ is on the $\alpha \leftrightarrow \beta$ geodesic, or if $\alpha=\beta$), where $\mu,\nu$ are either real, or lie on the critical line.
\end{thm}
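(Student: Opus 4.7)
The plan is to evaluate $\langle\phi_{\mu,\alpha}|\phi_{\nu,\beta}\rangle = \sum_{i\in V(T_p)} p^{(1-\mu)\langle i,\alpha\rangle + \nu \langle i,\beta\rangle}$ directly, by decomposing the tree according to the tripod structure of the three points $\{O,\alpha,\beta\}$ and regularizing the resulting (multi-)geometric sums via Definition \ref{prescription8}. Throughout I will write $s = \mu^\ast+\nu$ and $t = (1-\mu)-\nu$, so that the exponent in the sum can be reorganized according to the position of each vertex relative to the tripod.

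For the off-diagonal case $\alpha\ne\beta$, let $O'$ be the median of $\{O,\alpha,\beta\}$, set $L=\ell(O,O')$, and partition $V(T_p)$ into (i) the finite segment $O\to O'$, (ii) the two infinite arms $O'\to\alpha$ and $O'\to\beta$, and (iii) the off-tripod ``branches'' rooted at each tripod vertex. Lemma \ref{lemma1} gives, on the segment at distance $j$ from $O$, $\langle i,\alpha\rangle=\langle i,\beta\rangle=j$; on the $\alpha$-arm at distance $m\ge 1$ from $O'$, $\langle i,\alpha\rangle=L+m$ and $\langle i,\beta\rangle=L-m$; and analogous formulas for each branch, where the branch depth $r\ge 1$ enters linearly with coefficient $-s$. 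Each piece then contributes a product of closed-form geometric or partial-geometric sums in $p^s$, $p^{\pm t}$, and $p^{1-s}$, weighted by the branch multiplicities ($p$ off $O$, $p-1$ off generic segment/arm vertices, $p-2$ off $O'$).

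Now apply the regularization to each factor. A key simplification is that $A+B = \sum_{m\ge 1}(p^{mt}+p^{-mt})$ sums to $-1$ identically under the prescription, because the regularized values $\tfrac{1}{p^{-t}-1}$ and $\tfrac{1}{p^t-1}$ sum to $-1$. Similarly, the off-ray branch sum $R = \sum_{r\ge 1}p^{r(1-s)}$ equals $\tfrac{1}{p^{s-1}-1}$ away from $s=1$ and $-1/2$ at $s=1$ via Case 4. Collecting the contributions, the arm term $p^{sL}(A+B)=-p^{sL}$ cancels the top of the segment sum $\sum_{j=0}^L p^{sj}$, and a combinatorial identity shows the net coefficient of $R$ vanishes at $s=1$, leaving $\tfrac{p^L-1}{p-1}$ at $s=1$ and $0$ elsewhere, which is exactly $\tfrac{p^{\ell(O,O')}-1}{p-1}\delta_{\mu^\ast+\nu-1}$.

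For the diagonal case $\alpha=\beta$ the tripod collapses to the single ray $O\to\alpha$: there is only one arm, and $O$ has $p$ off-ray branches instead of the $p-1$ of the $\alpha\ne\beta$ setup, accounting for the extra distributional piece. The sum reduces to $1 + \sum_{n\ge 1}p^{sn} + R + \tfrac{p-1}{p}\bigl(\sum_{n\ge 1}p^{sn}\bigr)R$, which vanishes as a meromorphic function in $s$ away from $s=0,1$. At these two points one of the geometric factors hits a pole of the type $\tfrac{1}{s\log p}$, and the Laurent constant-term prescription extracts the claimed coefficients $\tfrac{1}{1-p}$ at $s=0$ and $\tfrac{p}{1-p}$ at $s=1$. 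The second (geodesic) term of the theorem vanishes here since $L=0$.

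The main obstacle will be the careful and consistent handling of the regularization: formally equivalent reorderings of the sum give regularization-distinct answers, so the proof must fix a decomposition (I propose branches rooted at tripod vertices, with per-branch depth sums done first and the base-point summations done second) and apply Definition \ref{prescription8} uniformly. The crucial pieces are the identity $A+B=-1$, the vanishing of the $R$-coefficient at $s=1$ for $\alpha\ne\beta$ (which guarantees finiteness and produces the polynomial $(p^L-1)/(p-1)$), and correctly tracking how the increased branch count at $O$ in the diagonal case produces the extra factor of $p$ in the $\tfrac{p}{1-p}$ coefficient.
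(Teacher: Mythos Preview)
Your approach is essentially the paper's: decompose the vertex set by the geometry of $\{O,\alpha,\beta\}$, evaluate the brackets via Lemma~\ref{lemma1}, and reduce to regularized geometric series. Your unified tripod treatment of $\alpha\neq\beta$ merges the paper's Cases~2 and~3 into a single computation parametrized by $L=\ell(O,O')$, and your identity $A+B=-1$ for the two regularized arm sums is precisely the mechanism behind the vanishing ``$2\zeta_0+1$'' factors there. That part is fine, and your check that the coefficient of $R$ vanishes at $s=1$ is exactly what is needed to produce $(p^L-1)/(p-1)$.

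The gap is in your diagonal ($\alpha=\beta$) case. Your expression
\[
1 + G + R + \tfrac{p-1}{p}\,G R,\qquad G=\sum_{n\ge1}p^{sn},\quad R=\sum_{r\ge1}p^{r(1-s)},
\]
is invariant under $s\leftrightarrow 1-s$ (which swaps $G\leftrightarrow R$), so it \emph{cannot} produce the asymmetric pair $\bigl(\tfrac{1}{1-p},\tfrac{p}{1-p}\bigr)$ at $s=0,1$ that you assert; regularizing each factor by Definition~\ref{prescription8} and combining gives $\tfrac{1}{1-p}$ at both points. (You also note the combined expression vanishes identically as a meromorphic function of $s$---but then its Laurent constant term is $0$ everywhere, a third answer.) The paper's Case~1 obtains the asymmetry by a specific grouping that your decomposition undoes: it keeps the ``above~$O$'' block as the single sum $\sum_{\ell\ge0}p^{(1-s)\ell}$ and, at $s=1$, assigns both this sum and the inner branch sum $\sum_{\ell\ge1}p^{(1-s)\ell}$ the \emph{same} formal symbol $\zeta_0$; after the $\zeta_0$'s cancel one is left with $B=\tfrac{p}{1-p}$. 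Once you peel off the $\ell=0$ contribution as your explicit ``$1$'', the identical cancellation leaves $1+B=\tfrac{1}{1-p}$ instead. So either mirror the paper's grouping and its formal-symbol bookkeeping exactly, or identify which step of the regularization is meant to break the $s\leftrightarrow1-s$ symmetry---the bare assertion that the Laurent prescription delivers the stated coefficients does not survive contact with your own formula.
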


\begin{figure}[t]
\input{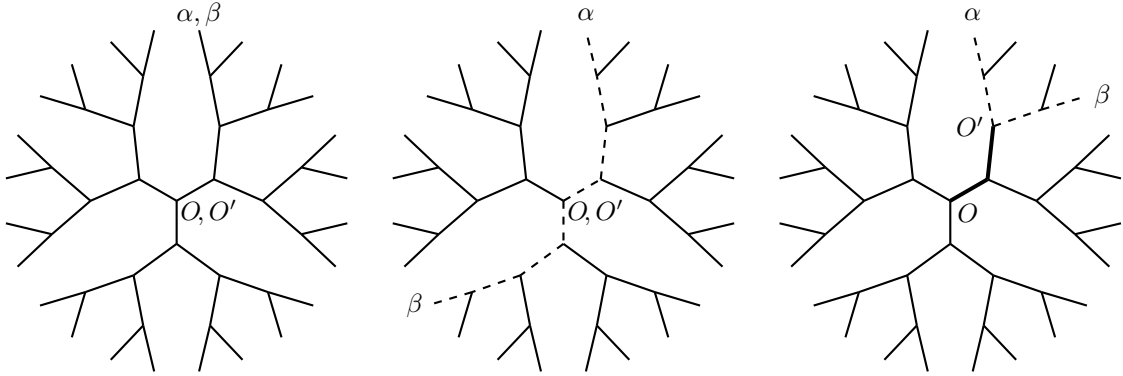}
\caption{The three possible relative positions of points $\alpha$, $\beta$, $O$ and $O'$. The geodesic $\alpha\leftrightarrow\beta$ (if it exists) is represented dashed, and the distance from $O$ to $\alpha\leftrightarrow\beta$ by thickened segments. The last term in Eq. \eqref{hereisnorm0} only appears in the third configuration. On this figure $p=2$ and $\ell\lb O,O' \rb=2$.}
\label{OOOprime}
\end{figure}

In expression \eqref{hereisnorm0} $\delta_\star$ and $\delta_{\star,\star}$ are Kronecker delta functions. All possible relative positions of vertices $\alpha$, $\beta$ and $O$ are shown in Figure \ref{OOOprime}, and the proof of Theorem \ref{thmnorm} is given in Appendix \ref{appA}. Note that $\alpha,\beta\in\pd T_p$, so they can be thought of as points of $\mathbb{P}^1(\mathbb{Q}_p)$. The Kronecker delta function $\delta_{\alpha,\beta}$ in Eq. \eqref{hereisnorm0} is off by a normalization factor from the Dirac delta function of $\mathbb{P}^1(\mathbb{Q}_p)$, however this normalization factor will not be important in the present paper.

\begin{comm}
We will study the effects of the first and second term in Eq. \eqref{hereisnorm0} in the rest of the paper. Interpreting the half geodesic from $\alpha$ to $O$ as a time direction (which we will explain in Section \ref{fixedbvert} below), Eq. \eqref{hereisnorm0} implies that there are $p+1$ independent movers on the worldsheet, each with a different forward time direction, since one can pick $p+1$ boundary points so that the third term in Eq. \eqref{hereisnorm0} does not contribute.
\end{comm}

\subsection{Worldsheet time direction and local symmetry transformations}
\label{fixedbvert}
\label{statelabel}

In the previous section the plane waves were defined with respect to arbitrary reference vertices on the boundary. However, in the rest of the paper we will fix \emph{one} boundary vertex, denoted $\infty$, with respect to which all plane waves are defined. Doing so discards the third term in the plane wave normalization \eqref{hereisnorm0}, and is related to introducing a kind of time direction on the tree, as we will explain shortly. Furthermore, we have required the scalar field configuration on the tree to respect the symmetry given by the stabilizer $S\lb O,\infty\rb$ (see comment \ref{eigen}); this can be thought of as an analog of a local (Lorentz-, or rotation-like) symmetry group on the discrete worldsheet.

To understand how the stabilizer $S\lb O,\infty\rb$ acts, note that by Lemma \ref{lemmalat}, the matrix representatives of the lattice equivalence classes can be taken to be upper triangular. The following Lemma relates the propagation bracket to these representatives.

\begin{lemma}
With center vertex $O$, the propagation bracket \eqref{propbrak} is related to the matrix representatives
\be
\begin{pmatrix}
a & b \\
0 & d
\end{pmatrix}
\ee 
of the lattice equivalence classes by
\be
\label{230label}
\langle i,\infty \rangle = v(d) - v(a) - \lsb v(d_O) - v(a_O) \rsb.
\ee
\end{lemma}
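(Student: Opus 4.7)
I will proceed by induction on the tree distance $\ell(O,i)$. First note that the right-hand side of \eqref{230label} is well-defined: scaling $M\to\alpha M$ shifts both $v(a)$ and $v(d)$ equally, and the residual right-multiplication by upper-triangular elements of $\mathrm{GL}_2(\mathbb{Z}_p)$ (the remaining gauge freedom after Lemma \ref{lemmalat}) modifies $a,d$ by units. Applied also to $\mathcal{L}_O$, this ensures the correction $[v(d_O)-v(a_O)]$ is unambiguous, so it suffices to fix the standard representative $\mathcal{L}_O = \mathrm{diag}(1,1)$, in which case the subtraction vanishes and I must show $\langle i,\infty\rangle = v(d)-v(a)$. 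The base case $i=O$ is immediate: $O\to O$ is a trivial path overlapping every other path in zero edges, so $\langle O,\infty\rangle = 0 = v(1)-v(1)$.

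For the inductive step, I apply Comment \ref{commm1} with $v_1=(a,0)$ and $v_2=(b,d)$: the $p+1$ neighbors of $i$ have upper-triangular representatives
\[\begin{pmatrix} a & pb \\ 0 & pd \end{pmatrix} \quad\text{and}\quad \begin{pmatrix} pa & ja+b \\ 0 & d \end{pmatrix},\ j=0,\ldots,p-1.\]
The first shifts $v(d)-v(a)$ by $+1$; each of the remaining $p$ shifts it by $-1$. The geometric heart of the argument is to identify the first neighbor, and only the first, as the step along the half-geodesic $i\to\infty$. This follows from $B$-equivariance: $M=\begin{pmatrix}a&b\\0&d\end{pmatrix}$ lies in the Borel $B$, which stabilizes the boundary point $\infty=[1:0]\in\mathbb{P}^1(\mathbb{Q}_p)$, so the tree automorphism induced by $M$ sends the half-geodesic $O\to\infty$ (with vertices $[\mathbb{Z}_p e_1 + p^k \mathbb{Z}_p e_2]$ for $k\geq 0$) onto the half-geodesic $i\to\infty$ (with vertices $[\mathbb{Z}_p v_1 + p^k \mathbb{Z}_p v_2]$ for $k\geq 0$), whose $k=1$ term is precisely the lattice $(v_1,pv_2)$.

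The last ingredient is Lemma \ref{lemma1}, which implies that $\langle\cdot,\infty\rangle$ likewise jumps by $+1$ along the unique toward-$\infty$ edge and by $-1$ along each other edge; this is a short four-case check using the configurations of $(\infty,O,i)$ in Figure \ref{fig1}, the two regimes being distinguished by whether $O$ lies beyond $i$ on the geodesic to $\infty$ or not. Since both $v(d)-v(a)$ and $\langle\cdot,\infty\rangle$ have the same increments on every edge and agree at $O$, connectedness of the tree completes the induction. The main obstacle is the middle-paragraph identification of $(v_1,pv_2)$ as the toward-$\infty$ neighbor; this is the only point where the specific boundary choice $\infty=[1:0]$ enters, and once it is settled the rest is mechanical.
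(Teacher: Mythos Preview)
Your proof is correct. Both arguments ultimately rest on Comment~\ref{commm1} (the explicit neighbor lattices), but you organize it differently from the paper. The paper's proof is a direct case split: it writes down, for each of the two geometric configurations of $(\infty,O,i)$ in Lemma~\ref{lemma1}, the explicit upper-triangular representative of $[\LL_i]$ in terms of $a_O,d_O$ and the distances $\ell(O,i')$, $\ell(i',i)$, and reads off $v(d)-v(a)$ in each case. Your approach instead inducts edge-by-edge, matching the increments of the two sides across every edge and invoking connectedness. What you gain is a cleaner justification of the key geometric fact---that the neighbor $(v_1,pv_2)$ is the one pointing toward $\infty$---via the $B$-equivariance of the tree action; the paper leaves this implicit in its asserted matrix forms. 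What the paper's approach buys is an explicit closed-form representative at every vertex, not just the valuation difference, which is slightly more information. Your well-definedness remark (that the residual gauge freedom is right-multiplication by upper-triangular $\mathrm{GL}_2(\mathbb{Z}_p)$, hence alters $a,d$ only by units) is a useful addition that the paper omits.
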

If $O$ has representative the identity matrix the square bracket vanishes.
\begin{proof}
Let $i'$ be the vertex at which the $\alpha \to O$ and $\alpha \to i$ paths stop coinciding, as before. From Comment \ref{commm1}, the matrix representative for a vertex $i$ for which vertices $\infty,O,i$ are on the same geodesic in this order ($i=i'$ in this case) can be written as
\be
\begin{pmatrix}
p^{\ell\lb i',O \rb} a_O & \star  \\
0 & d_O
\end{pmatrix}.
\ee
If instead vertices $\infty,O,i$ are not on one geodesic in this order, the matrix representative can be written as
\be
\begin{pmatrix}
p^{\ell\lb i,i'\rb} a_O & \star \\
0 & p^{\ell\lb i',O \rb} d_O
\end{pmatrix}.\qedhere
\ee
\end{proof}

\begin{lemma}
The bracket $\langle i,\infty\rangle$ is invariant under the stabilizer $S\lb O,\infty \rb$.
\end{lemma}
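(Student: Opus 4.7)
The strategy is to combine the explicit formula \eqref{230label} from the preceding Lemma with the explicit description of $S(O,\infty)$ provided by Lemma \ref{stablemma}, and check that valuations of the diagonal entries of upper‑triangular representatives are preserved by left multiplication by elements of the stabilizer.

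First, I would unpack the stabilizer. By Lemma \ref{stablemma}, every $M\in S(O,\infty)$ can be written in the form
\be
M = \begin{pmatrix} \alpha & \beta \\ 0 & \delta \end{pmatrix}, \qquad \alpha,\delta\in\mathbb{Z}_p^\times,\ \beta\in\mathbb{Z}_p,
\ee
since $M$ is simultaneously upper triangular and in $\mrm{PGL}(2,\mathbb{Z}_p)$ (the diagonal entries must be units for $M$ to be invertible over $\mathbb{Z}_p$). Next, given any vertex $i$ with upper‑triangular representative $L_i=\begin{pmatrix} a & b \\ 0 & d\end{pmatrix}$ (Lemma \ref{lemmalat}), I would compute the new representative after acting by $M$:
\be
M\cdot L_i \;=\; \begin{pmatrix} \alpha a & \alpha b+\beta d \\ 0 & \delta d \end{pmatrix},
\ee
which is still upper triangular, hence already in the form required by formula \eqref{230label}.

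The key observation is then purely $p$-adic: since $\alpha,\delta\in\mathbb{Z}_p^\times$, one has $v(\alpha a)=v(a)$ and $v(\delta d)=v(d)$, so the combination $v(d)-v(a)$ that appears in \eqref{230label} is unchanged. The same calculation applied to the representative of $O$ shows that $v(d_O)-v(a_O)$ is likewise unchanged (in fact, choosing the identity as representative of $O$, the acted‑on representative is $M$ itself, for which $v(\delta)-v(\alpha)=0$). Plugging back into \eqref{230label} yields $\langle M\cdot i,\infty\rangle = \langle i,\infty\rangle$.

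\textbf{Expected obstacles.} The main subtlety is bookkeeping: lattice representatives are defined only up to right multiplication by $\mrm{GL}(2,\mathbb{Z}_p)$ and up to overall scaling by $\mathbb{Q}_p^\times$, and the upper‑triangular normal form of Lemma \ref{lemmalat} is not unique. I would therefore need to verify that the quantity $v(d)-v(a)$ is genuinely well defined on upper‑triangular representatives of a fixed equivalence class---this follows because any two such representatives differ by right multiplication by an upper‑triangular element of $\mrm{GL}(2,\mathbb{Z}_p)$ times an overall scalar, and both operations shift $v(a)$ and $v(d)$ by the same amount. Once this well‑definedness is in place, the invariance is a one‑line consequence of the unit property of $\alpha$ and $\delta$.
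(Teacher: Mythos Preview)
Your proposal is correct and follows exactly the approach the paper takes: the paper's proof reads in full ``Immediate from Eq.~\eqref{230label},'' and you have simply spelled out the one-line computation behind that word (left multiplication by an upper-triangular $\mrm{PGL}(2,\mathbb{Z}_p)$ element multiplies the diagonal entries by units, hence preserves $v(d)-v(a)$). Your additional remark on well-definedness of the upper-triangular normal form is a reasonable sanity check, though the paper does not bother with it.
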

\begin{proof}
Immediate from Eq. \eqref{230label}.
\end{proof}

The $P_\infty$ momentum eigenfunctions also respect the $S\lb O,\infty\rb$ symmetry.

We now explain the intuition behind interpreting the special half-geodesic $\lb O,\infty \rb$ as a time direction for the scalar degrees of freedom, and the stabilizer $S\lb O,\infty\rb$ as a rotation-like local symmetry group. In the Archimedean case, there is a sharp distinction between the manifold $\MM$ and the local symmetry group acting (on say the tangent space) at a point; this distinction becomes considerably blurred in the case of the Bruhat-Tits tree.

Consider the momentum operator $P_\alpha$ acting in all $p+1$ directions on an eigenfunction $p^{\mu\langle i,\infty\rangle}$ at a vertex $i$. This produces an eigenvalue vector
\be
\lb p^{\mu}, p^{-\mu}, \dots, p^{-\mu} \rb,
\ee
so at each vertex $i$ the direction pointing toward $\infty$ is distinguished. We identify this direction as a time direction. Then the half-geodesic $\lb\infty,O\rb$ acts as a time direction for center $O$. At each vertex $i$, the stabilizer subgroup $S\lb O,\infty \rb$ acts by identifying (i.e. permuting) the $p$ ``spacelike'' neighbors of $i$, thus it acts similarly to a rotation. Note that the action of $S\lb O, \infty\rb$ identifies all vertices $j$ of a fixed bracket value $\langle j,\infty \rangle$ which share a common leg $j'$ on the $\lb O,\infty\rb$ geodesic. 

The symmetries discussed above are tied to the notion of conjugation we have introduced, and they have consequences for the operator-state correspondence on the tree, as we now briefly explain.

\begin{comm} Given a Laplacian eigenvalue $\Lambda_\mu$, there are two momentum eigenvectors, related by $\mu\to1-\mu$. This is analogous to having momentum eigenvectors of different orientations and the same energy in the Archimedean case. The $\mu$ and $1-\mu$ eigenvalues thus correspond to incoming and outgoing states, and conjugation is equivalent to switching the time arrow on the tree.
\end{comm}

{}

\begin{comm}
Imagine that we didn't choose an infinity to start with, and therefore had the full $\mrm{PGL}(2,\mathbb{Q}_p)$ symmetry. Then the only momentum (or Laplacian) eigenfunction respecting that symmetry would be the constant function on the graph.

Physics becomes nontrivial when the symmetry gets broken by a choice of infinity, which as we have seen, gives rise to a time direction of the $p$-adic string, as the path to infinity. Then, given any vertex on the graph, a local state has to transform according to a representation of our leftover symmetry, which plays the role of the local spatial rotation subgroup of the Lorentz group. If one computes states for the trivial representation that is also a momentum eigenstate, one then obtains Eq. \eqref{momentumeigenfunction}.
\end{comm}

\section{Strings from a distance}
\label{stringsfromdistance}

In this section we switch gears to string theory. The worldsheet is the Bruhat-Tits tree $T_p$, and the target space is locally $\mathbb{R}^d$. Following the discussion above, we are interested in functions
\be
\label{Tmap}
X: V(T_p) \to \mathbb{R}^d. 
\ee
Note that since the action \eqref{eq12} is given by a distance, we could also consider the maps $X$ to be functions $X:V(T_p)\to\mathbb{Q}_p^d$, since $\mathbb{Q}_p$ is equipped with a metric. However, we will not do so in the present paper.

\begin{comm}
\label{comm}
The map \eqref{Tmap} has a peculiar feature: the target space $\mathbb{R}^d$ has uncountably many points, however the vertices of the worldsheet $T_p$ are countable. Thus, uncountably many points in the target space have empty preimage in $T_p$. This is in the spirit of \cite{Stoica:2018zmi}.
\end{comm}

\begin{comm} 
We will be comparing the target space position vectors $X_i$, $X_j$, when $i$ and $j$ are neighbors on the worldsheet, so we need them to be ``close.'' Thus, we will be in a situation where $X_i$ and $X_j$ are close in the target space, and separated at order $p$ on the worldsheet.
\end{comm}

As explained in the introduction, we propose that the action takes the form
\be
\label{Sdist}
S = \frac{1}{V} \sum_{\ij\in E(T_p)} \frac{d^2\lb X^a_i,X^b_j \rb}{a^2_\ij},
\ee
with $d$ the geodesic distance on the target manifold and $V$ the degree of each vertex in the Bruhat-Tits tree. The multiplicative factor of $V$ is important to ensure that the action is invariant under certain rescalings, as we will explain in Section \ref{secscale}.

\begin{comm}
The action \eqref{Sdist} is unitless (so one can multiply it with Planck's constant to restore proper units for action).
\end{comm}

\begin{comm}
Action \eqref{Sdist} depends on the distance squared. This square is important to recover the Polyakov action from our action. However, it may also be possible to write other actions that recover the same physics via field redefinitions.
\end{comm}

We now work in a normal coordinate patch in the target space with center denoted by $0$, and parameterize the target space path between points $X_i$ and $X_j$ as 
\be
\label{yis}
y(t) \coloneqq X_i + (X_j-X_i)t, \qquad 0\leq t\leq 1,
\ee
and furthermore define a vector $W$ by
\be
X_j =: X_i + W.
\ee
In this normal coordinate patch we have the metric in terms of coordinates $X_i$
\be\label{metric expansion}
g_{ab}(X_i) = \eta_{ab} - \frac{1}{3} R_{acbd} X_i^c X_i^d + \OO\lb \frac{X^3}{\ell_R^3} \rb, 
\ee
where $R_{acbd}$ is the curvature tensor at the center of the patch. Then the action can be written as
\be
\label{act27}
S[X] = \frac{1}{V} \sum_{\ij \in E(T_p)} \frac{1}{a_\ij^2} \lb \int_{0}^{1} \sqrt{g_{ab} W^a W^b}  dt \rb^2.
\ee

Plugging in \eqref{metric expansion}, and using the antisymmetry of the curvature tensor, action \eqref{act27} becomes
\ba
S[X] &=& \frac{1}{V} \sum_{\ij \in E(\MM)} \frac{1}{a_\ij^2}  \lsb \eta_{ab} \lb  X^a_j - X^a_i \rb \lb X^b_j - X^b_i \rb - \frac{1}{3} R_{acbd} X_i^c X_i^d X_j^a X_j^b \rsb \nn \\
& & + \OO\lb \frac{X^3}{\ell_R^3} \rb .
\label{eqq213}
\ea

Here $\ell_R$ is the length scale over which the target space curvature varies.

\begin{comm}
The size of the region where this coordinate choice is valid is controlled by the length scale $X$ relative to the target space curvature length scale $\ell_R$. I.e. it is valid when the string is mapped to a region whose scale is much less than the curvature length scale.
\end{comm}

\begin{comm}
Analogously, in the usual Archimedean derivation of the target space Einstein equations, the regime of validity of the computation is controlled by $\sqrt{\alpha'}$ and $\ell_R$. In the $p$-adic case, length scale $X$ being smaller than the target space length scale implies $a\ll \ell_R$. Thus, $a$ can be thought of as the $p$-adic analogue of $\sqrt{\alpha'}$.
\end{comm}

\begin{comm}
In the argument above we are assuming the coordinate patch contains the entirety of the worldsheet. We now justify this. First, the mode coefficients $c_{\mu}$ with large absolute value have an exponentially small contribution to the path integral, and can be ignored.
\end{comm}

\section{The path integral}
\label{pathintsec}

\subsection{Scaling invariance}
\label{secscale}

We are interested in considering a $p$-adic string version of the Archimedean worldsheet scaling symmetry. Explicitly, when the target is flat, a way to consider this scaling symmetry is to assign a uniform edge multiplicity (i.e. to consider a number $\lambda>1$ of parallel edges between any two neighbors in the tree). The vertex valence and Laplacian then scale as
\be
\label{thisisscaling}
V \to \lambda V, \quad \Lap \to \lambda \Lap,
\ee 
so that the action derived in the previous section,
\be
\label{action43}
S[X] = \frac{1}{V} \sum_{\ij \in E(\MM)} \frac{1}{a_\ij^2} \eta_{ab} \lb  X^a_j - X^a_i \rb \lb X^b_j - X^b_i \rb 
,
\ee
remains invariant. The scaling of the Laplacian can be understood by writing it in matrix form as
\be
-\Lap = \begin{pmatrix}
p+1 & -1 &  & -1\\
-1 & p+1 &  & \\
  &  & \ddots & \\
  -1 &  &  & p+1\\
\end{pmatrix},
\ee
such that each entry corresponds to a number of neighbors (in the graph with no multiplicity). The Archimedean analog of scaling the edge weight contribution is scaling the worldsheet metric in the Polyakov action, and the analog of scaling $V$ is scaling the volume element.

In general, when the target is not necessarily flat, the scaling symmetry is $V \to \lambda V, \quad d(X_i,X_j)^2 \to \lambda d(X_i,X_j)^2$.

\begin{comm}
There exists a second scaling under which action \eqref{action43} (or \eqref{Sdist}) is invariant:
\be
\label{41}
a_\ij \to \lambda' a_\ij, \quad d(X_i,X_j) \to \lambda' d(X_i,X_j).
\ee
The interpretation of Eq. \eqref{41}  is that scaling the worldsheet edge lengths by $\lambda'$ (keeping multiplicity constant) must be accompanied by a scaling of the target manifold distances in the same way, in order for the action to remain invariant. We expect scalings \eqref{thisisscaling} and \eqref{41} to be equivalent. Scaling \eqref{41} also applies to the gravitational worldsheet actions discussed in \cite{Gubser:2016htz}.
\end{comm}

We would like this scaling symmetry, which is a classical symmetry, to remain a symmetry at the quantum level, to avoid RG running of the target space metric. Said another way, generically there is no scale-invariant way to regularize action \eqref{eqq213}, but we would like the theory to not depend on any parameters that the regularization may introduce.

In particular, we want the target space metric $g_{ab}$ to not RG run, so that the $\langle X^a X^b \rangle$ two-point function must not receive quantum corrections.

\subsection{The two-point function}

From now on we will consider that all edge lengths $a_{\ij}$ are equal to a common edge length $a$. 

The partition function is (see e.g. \cite{zelenovpathintegral} for an introduction to $p$-adic path integrals)
\be
Z[R] = \int DX e^{i S[X,R]}.
\ee
The bare (meaning in the absence of interactions) two-point function is computed~as
\ba
\langle X_i^a X_j^b \rangle_\mrm{bare} &=& \frac{1}{Z[0]} \int DX \exp\lcb \frac{i}{Va^2} \sum_{\kl \in E(\MM)} \lsb \eta_{mn} \lb  X^m_k - X^m_l \rb \lb X^n_k - X^n_l \rb \rsb \rcb X_i^a X_j^b \nn\\
\label{eq54}
\\
\label{2pt1}
&=& \frac{1}{Z[0]} \int DX \exp \lcb - \frac{i}{Va^2} \sum_{k\in V(\MM)}\eta_{mn}  X^m_k \Lap X_k^n\rcb X_i^a X_j^b,
\ea
where in the second equality we have rewritten the sum.

We thus expand $X$ in eigenfunctions of the Laplacian that obey the $S\lb O,\infty \rb= B\cap \mrm{PGL}\lb 2,\mathbb{Z}_p \rb$ symmetry (in particular, keeping the boundary reference vertex $\infty$ fixed), and from now on we suppress notation of the boundary reference index. We have 
\be
\label{alphaexpansion}
X^a_k = \int_{\mu\in D} c_\mu^a \phi_\mu(k).
\ee
Here $D$ is the domain over which eigenvalue parameter $\mu$ runs. We choose $D$ such that the eigenvalues of the Laplacian are real and non-positive (the Laplacian is negative semi-definite in our conventions). Since we have removed the $\Im\lb \mu \rb = \pi/\log p$ line (as explained in Comment \ref{commm8}), $D$ consists of two regions,
\ba
\label{hereD}
D &=&  \lcb 0 \leq \mu \leq 1\rcb \cup \lcb \Re(\mu) = \frac{1}{2}, -\frac{\pi}{\log{p}}\leq \Im(\mu)\leq \frac{\pi}{\log{p}} \rcb \\
  &\eqqcolon& D_\mrm{real} \cup D_\mrm{crit}.
\ea
$D_\mrm{real}$ is on the real line, and $D_\mrm{crit}$ is a period of the eigenvalue function on the critical~line.

Furthermore, as $X$ is a real field, this implies that
\ba
\label{realitycondition}
\overline{c}_{\frac{1}{2}+i t} &=& c_{\frac{1}{2}-it}, \quad t \in \mathbb{R},
\ea
with the overline denoting complex conjugation.

We need to expand the term in the action containing the Laplacian. Using  Eq. \eqref{alphaexpansion}, we have
\be
\sum_{k\in V(\MM)}\eta_{mn}  X^m_k \Lap X_k^n = \sum_{k\in V(\MM)} \int_{\mu,\nu\in D} \Lambda_\nu c_{\mu}^m c_{\nu}^m  \phi_{\mu}(k) \phi_{\nu}(k)
\ee
with $\Lambda_\nu \coloneqq \lambda_\nu-p-1$ the Laplacian eigenvalues.

There are two possible contributions to the integral, according to Eq. \eqref{hereD}, however, it will turn out that only the critical line gives a nonvanishing contribution. Let's now compute this contribution; the vanishing of the $D_\mrm{real}$ contribution, using the regularization in Definition \ref{prescription8}, is shown in Appendix \ref{appvanishing}.

We parameterize the critical line integrals as
\be
\mu = \frac{1}{2} + i t', \quad \nu = \frac{1}{2} + i t,
\ee
so that
\ba
\sum_{k\in V(\MM)}\eta_{mn}  X^m_k \Lap X_k^n \Big|_{D_\mrm{crit}} &=& \frac{1}{1-p} \int_{\mu,\nu\in D_\mrm{crit}} \Lambda_\nu c_{\mu}^m c_{\nu}^m  \delta_{\mu+\nu-1} \\
&=& \frac{1}{1-p}\int_{t,t'\in\mathbb{R}} \Lambda_{\frac{1}{2}+it} c_{\frac{1}{2}+it}^m c_{\frac{1}{2}+it'}^m  \delta_{t+t'} \nn\\
&=& \frac{1}{1-p} \delta t \int_{-\frac{\pi}{\log{p}}}^{\frac{\pi}{\log{p}}} \Lambda_{\frac{1}{2}+it} \left| c^m_{\frac{1}{2}+it} \right|^2 dt, \nn
\ea
with $|\cdot|$ the Lorentzian norm on the worldsheet, $\delta t$ a discretization parameter, and we have restricted the critical line integral to a period of the mode eigenfunction, since the other periods are redundant parameterizations.

Then the two-point function computation \eqref{2pt1} becomes
\ba
\label{513}
\langle X_i^a X_j^b \rangle_\mrm{bare} &=& \frac{1}{Z[0]} \int Dc \exp \lcb - \frac{i\delta t}{Va^2} \frac{1}{1-p} \int_{t} \Lambda_{\frac{1}{2}+it} \left| c^m_{\frac{1}{2}+it} \right|^2 \rcb\times \nn\\
& &\times \int_{t_1,t_2} c_{\frac{1}{2}+it_1}^a c_{\frac{1}{2}+it_2}^b \phi_{\frac{1}{2}+it_1}(i)\phi_{\frac{1}{2}+it_2}(j).
\ea
If $a\neq b$ then the $c^a_{\frac{1}{2}+it}$ integral is again odd and vanishes. 

Eq. \eqref{513} is the usual path-integral prescription for the free two-point function, which can be evaluated using the standard argument. We denote $T\coloneqq\pi/\log{p}$, and discretize the integrals as $t\to \iota \delta t$, with $\iota$ the discrete variable and summation endpoint $I\coloneqq T/\delta t$. Furthermore, we relabel subscripts of the form $1/2 + i t$ on $c$ and $\phi$ by $\iota$. We can then write
\ba
\langle X_i^a X_j^b \rangle_\mrm{bare} &=& \frac{1}{Z[0]} \int Dc \exp \lcb - \frac{i\delta t^2}{Va^2} \frac{1}{1-p} \sum_{\iota = - I}^I \Lambda_{\iota} \left| c^m_{\iota} \right|^2 \rcb \sum_{\iota_1,\iota_2=-I}^I c_{\iota_1}^a c_{\iota_2}^b \phi_{\iota_1}(i)\phi_{\iota_2}(j) \delta t^2 \nn \\
\label{420eq}
&=& \frac{\eta^{ab}}{Z[0]} \sum_{i_1=-I}^I \lb \prod_{\iota} \frac{a}{\delta t} \sqrt{\frac{\pi V \lb p-1 \rb}{ i\Lambda_\iota}} \rb \frac{ V a^2 \lb p-1 \rb}{2 i \Lambda_{\iota_1}} \phi_{\iota_1}(i) \phi_{-\iota_1}(j).
\ea
Note that in obtaining Eq. \eqref{420eq} combinations such as $c^2$ multiplying the exponential vanish by complex integration.

The prefactor cancels against the normalization in Eq. \eqref{420eq} and we can restore the $t$ integral to obtain
\be
\label{eq515}
\langle X_i^a X_j^b \rangle_\mrm{bare} = \frac{iV a^2 }{2 \delta t} \lb 1-p\rb \eta^{ab} \int_t \frac{1}{\Lambda_{\frac{1}{2}+it}} \phi_{\frac{1}{2}+i t}(i) \phi_{\frac{1}{2}-i t}(j).
\ee

In integrating over the Fourier configurations we used a Kronecker delta instead of a Dirac delta function, at the expense of introducing a factor of $\delta t$. Here $1/\delta t$ plays the role of the string tension, so the combination $a^2/\delta t$ that appears in Eq. \eqref{eq515} is natural.

Expression \eqref{eq515} is $i\leftrightarrow j$ exchange symmetric, since $\Lambda_{\frac{1}{2}+i t}=\Lambda_{\frac{1}{2}-i t}$. Eq. \eqref{eq515} can be thought of as the analogue of the momentum space representation of a bare propagator.

\subsection{One-loop correction to the two-point function}
\label{one loop}
We are interested in computing the object
\ba
\label{heres523}
\langle X_i^a X_j^b \rangle_\mrm{1-loop} = \frac{1}{Z[0]} \int DX \exp \lcb - \frac{i}{a^2V} \sum_{k\in V(\MM)}  X^{m'}_k \Lap X_k^{m'}\rcb \times\\
\times \sum_{k_{1}\in V(T_p)} \sum_{k_2\sim k_1} \lb - \frac{1}{6a^2V} R_{mpnq} X_{k_1}^m X_{k_1}^n X_{k_2}^p X_{k_2}^q \rb X_i^a X_j^b, \nn
\ea
where $k_2\sim k_1$ means that the second sum should be over all the neighbors of $k_1$, and the extra factor of $2$ in the denominator accounts for double-counting the edges.

The first step is to evaluate the $k_2$ sum, by expanding in Laplacian eigenfunctions. We have
\ba
\sum_{k_2\sim k_1} X_{k_2}^p X_{k_2}^q  &=& \int_{\lambda,\rho\in D} \sum_{k_2\sim k_1} c_\lambda^p c_\rho^q \phi_{\lambda}\lb k_2 \rb \phi_{\rho}\lb k_2 \rb \\
&=& \int_{\lambda,\rho\in D} \lb p^{1-\lambda-\rho} + p^{\lambda+\rho} \rb c_\lambda^p c_\rho^q \phi_{\lambda}\lb k_1 \rb \phi_{\rho}\lb k_1 \rb,
\ea
where we have used explicit expression \eqref{eigenfctplane} for the eigenfunctions. Then, expressing Eq. \eqref{heres523} in momentum space, we have
\ba
& &\langle X_i^a X_j^b \rangle_\mrm{1-loop} = \frac{1}{Z[0]} \int Dc \exp \lcb - \frac{i}{a^2V} \sum_{k'\in V(T_p)} \int_{\mu_{1,2}} \Lambda_{\mu_2} c^l_{\mu_1}c^l_{\mu_2} \phi_{\mu_1+\mu_2}(k') \rcb \times\nn\\
& & \times \sum_{k\in V(T_p)} \int_{\mu,...,\rho} \frac{-1}{6a^2V} R_{mpnq} \lb p^{1-\lambda-\rho} + p^{\lambda+\rho} \rb c_\mu^m c_\nu^n c_\lambda^p c_\rho^q \phi_{\mu+\nu+\lambda+\rho}\lb k \rb \times \nn\\
& &\times\int_{\mu_{3,4}} c^a_{\mu_3}c^b_{\mu_4} \phi_{\mu_3}(i) \phi_{\mu_4}(j),
\ea
We now compute the $k$ sum; using results in Appendix \ref{appA}, we have
\be
\sum_{k\in V(T_p) } \phi_{\sigma}\lb k \rb =\frac{1}{1-p} \lb \delta_{\sigma} + \delta_{1-\sigma} \rb,
\ee
so that
\ba
& &\langle X_i^a X_j^b \rangle_\mrm{1-loop} = \frac{1}{Z[0]} \int Dc \exp \lcb \frac{i}{a^2V\lb p-1\rb} \int_{\mu_{1,2}} \Lambda_{\mu_2} c^l_{\mu_1}c^l_{\mu_2} \lb \delta_{\mu_1+\mu_2} + \delta_{1-\mu_1-\mu_2} \rb \rcb \times\nn\\
& & \times \int_{\mu,...,\rho} \frac{1}{6\lb p-1\rb a^2V} R_{mpnq} \lb p^{1-\lambda-\rho} + p^{\lambda+\rho} \rb c_\mu^m c_\nu^n c_\lambda^p c_\rho^q \times \nn\\
\label{eq426oneloop}
& &\times \lb  \delta_{\mu+\nu+\lambda+\rho} + \delta_{1-\mu-\nu-\lambda-\rho} \rb \int_{\mu_{3,4}} c^a_{\mu_3}c^b_{\mu_4} \phi_{\mu_3}(i) \phi_{\mu_4}(j).
\ea

Let's now consider the Wick contractions. A priori, every parameter $\mu,\dots\in D$ can either be on the critical line, or in the $0\leq \mu \leq 1$ interval. However, it turns out the Greek indices pair in the path integral such as to give a possibly nonzero contribution only in the following combinations (here the notation means each entry in a bracket on the left and right sides is used precisely once, and the brackets are \emph{unordered}):
\begin{enumerate}
\item $\{\mu_3,\mu_4\}=\{0,0\}$, $\{\mu,\nu,\rho,\lambda\}=\{0,0,0,0\}$.

In this case the contribution is proportional to 
\ba
& &\int Dc\exp\lcb \frac{i}{a^2V(p-1)} \Lambda_0 \lb c_0^l \rb^2 \rcb \times\\
& & \times R_{mpnq} c_0^m c_0^n c_0^p c_0^q c^a_{0}c^b_{0} \phi_{0}(i) \phi_{0}(j),
\ea
which vanishes by antisymmetry of the Riemann tensor.
\item \label{case2} $\{\mu_3,\mu_4\}=\{0,0\}$, $\{\mu,\nu,\rho,\lambda\}=\{0,0,\frac{1}{2}+it,\frac{1}{2}-it\}$.

In this case, remembering reality condition \eqref{realitycondition},  the contribution is proportional~to
\ba
& &\int Dc \exp \lcb \frac{i}{a^2V\lb p-1\rb} \lb \Lambda_0 \lb c^l_0 \rb^2 + \int_{t} \Lambda_{\frac{1}{2}+it} \left| c^l_{\frac{1}{2}+it} \right|^2 \rb \rcb \times\nn\\
& & \times R_{mpnq} c_0^m c_0^n c_{\frac{1}{2}+it}^p c_{\frac{1}{2}-it}^q c^a_{0}c^b_{0} \phi_{0}(i) \phi_{0}(j).
\ea
\item \label{case3} $\{\mu_3,\mu_4\}=\{\frac{1}{2}+it,\frac{1}{2}-it\}$, $\{\mu,\nu,\rho,\lambda\}=\{0,0,\frac{1}{2}+it,\frac{1}{2}-it\}$.

In this case the contribution is proportional to
\ba
& &\int Dc \exp \lcb \frac{i}{a^2V\lb p-1\rb} \lb \Lambda_0 \lb c^l_0 \rb^2 + \int_{t} \Lambda_{\frac{1}{2}+it} \left| c^l_{\frac{1}{2}+it} \right|^2 \rb \rcb \times\nn\\
& & \times R_{mpnq} c_{\frac{1}{2}+it}^m c_{\frac{1}{2}-it}^n c_{0}^p c_{0}^q c^a_{\frac{1}{2}+it}c^b_{\frac{1}{2}-it} \phi_{\frac{1}{2}+it}(i) \phi_{\frac{1}{2}-it}(j).
\ea
\end{enumerate}

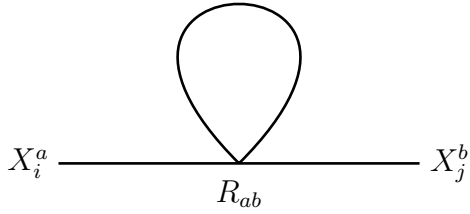
\begin{figure}[htp]
\centering
\centering
\begin{tikzpicture}[scale=0.8]
	\begin{pgfonlayer}{nodelayer}
		\node [style=none] (0) at (-3, -0) {};
		\node [style=none] (1) at (3, -0) {};
		\node [style=none] (2) at (0, -0) {};
                \node [style=none] (3) at (0, -0.5) {$R_{ab}$};
		\node [style=none] (4) at (-3.5, -0) {$X^a_i$};
		\node [style=none] (5) at (3.5, -0) {$X^b_j$};
	\end{pgfonlayer}
	\begin{pgfonlayer}{edgelayer}
		\draw [style=simple] (0.center) to (2.center);
		\draw [style=simple] (2.center) to (1.center);
		\draw [style=simple, in=135, out=45, loop,distance=5cm,fill=white] (2.center) to (2.center);
	\end{pgfonlayer}
\end{tikzpicture}
\caption{One-loop correction to the two-point function.}
\label{figbubble}
\end{figure}

Contributions \ref{case2}, \ref{case3} can be summarized as follows. In order to have a connected diagram with nonzero contribution, either $a,b$ are paired with $m,n$ (and $p$, $q$ with each other), or $a,b$ are paired with $p,q$ (and $m$, $n$ with each other), i.e. contractions of the type
\be
\contraction[2ex]{R_{mpnq}}{c^m}{c^nc^p c^q }{c^a} \contraction[3ex]{R_{mpnq}c^mc^n}{c^p}{}{c^q} \contraction{R_{mpnq}c^m}{c^n}{c^p c^q c^a}{c^b}  R_{mpnq} c^m c^n c^p c^q c^ac^b,
\ee
corresponding to the bubble diagram in Figure \ref{figbubble}. The diagram thus is proportional to the Ricci tensor $R_{ab}$.

For the case when $\mu_3=\mu_4=0$ (case \ref{case2} above), this contribution is naively divergent, and must be regularized by the introduction of an IR regulator $M^2$ in the bare propagator. This is because we are integrating over the $\mu=0$ zero mode on the real line (note that $\mu=0$ is paired with the pole in the Zeta function at $\mu=1$ by the functional equation). 

The leading contribution comes from case \ref{case2}, and is of order $1/M^4$, where $M^2$ is the IR regulator (the bare two-point function for the zero mode is just $1/M^2$). This divergence cannot be removed by our regularization scheme, since it corresponds to a zero mode, so a small variation in $\mu$ at $\mu=0$ keeps the action unchanged by our normalization computation.

In order for the one-loop correction to the two-point function to vanish (and for the target space metric to not RG flow), we must thus have $R_{ab}=0$. This gives the target space Einstein equations. Conversely, setting $R_{ab}=0$ also cancels the subleading divergences in $M$, as well as any finite corrections to the two-point function.

\begin{comm}
In case \ref{case3} the contribution comes from pairing $\mu_{3,4}$ with $1/2\pm i t$, and the zero modes with each other, for an overall divergence of $1/M^2$. Removing this divergence leads to the same result, that the Ricci curvature vanishes.
\end{comm}

\begin{comm}
It would be interesting to explore if the machinery developed in this paper for the computation of tree level and one-loop diagrams could also be used to compute Archimedean diagrams. It seems that the structure of the loop diagrams in the $p$-adic bosonic string is considerably simpler than the Archimedean one, due to normalization \eqref{hereisnorm0}. In particular, the divergences appear to come from the IR zero mode, and the momentum loops are all finite period integrals.
\end{comm}

\section*{Acknowledgments}

We would like to thank Steven S. Gubser, Harsha Hampapura, Matthew Headrick, Narges Iraji, Bong Lian, Maria Monica Nastasescu, Omer Offen, Howard J. Schnitzer, Zhiwei Yun and Dingxin Zhang for useful discussions. The work of A.H. was supported in part by a grant from the Brandeis University Provost Office, and by the Simons Foundation in Homological Mirror Symmetry. The work of B.S. was supported in part by the U.S. Department of Energy under grant DE-SC-0009987, by the Simons Foundation through the It from Qubit Simons Collaboration on Quantum Fields, Gravity and Information, and by a grant from the Brandeis University Provost Office. The work of S.-T. Yau was supported in part by the Simons Foundation in Homological Mirror Symmetry. B.S. would like to thank the Harvard Center for Mathematical Sciences and Applications for hospitality. 

\appendix

\section{Laplacian eigenfunction normalization}
\label{appA}

In this Appendix we compute the Laplacian eigenfunction normalization and prove Theorem \ref{thmnorm}.

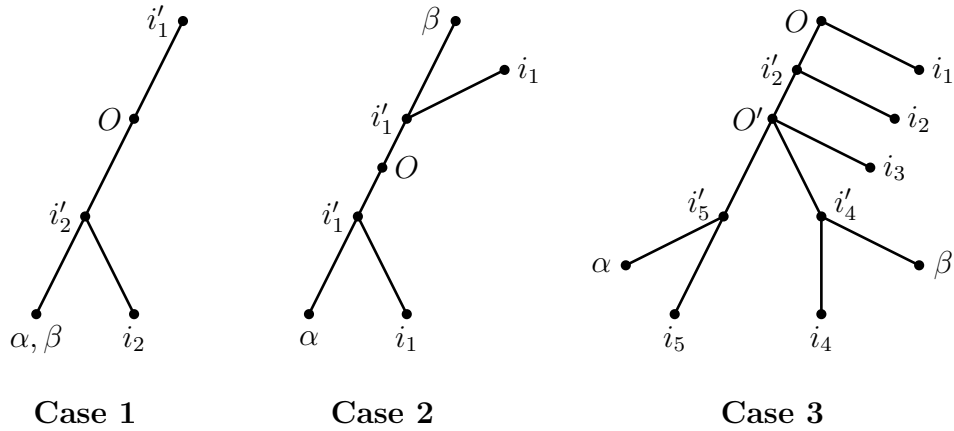
\begin{figure}[h]
\centering
\begin{tikzpicture}[scale=0.65]
 
\tikzset{VertexStyle/.style = {shape = circle,fill = white, minimum size = 0pt,inner sep = 0pt, opacity = 0, text opacity = 1}}
\node [style=simpletoo] (0) at (-2, 2) {};
\node [style=simpletoo] (1) at (-3, -0) {};
\node [style=simpletoo] (2) at (-4, -2) {};
\node [style=simpletoo] (3) at (-2, -2) {};
\node [style=simpletoo] (4) at (-1, 4) {};
\node [style=none] (5) at (-1.5, 4) {$i_1'$};
\node [style=none] (6) at (-2.5, 2) {$O$};
\node [style=none] (7) at (-3.5, -0) {$i_2'$};
\node [style=none] (8) at (-4, -2.5) {$\alpha,\beta$};
\node [style=none] (9) at (-2, -2.5) {$i_2$};
\node [style=none] (24) at (-3, -4) {\textbf{Case 1}};
\draw [style=simple] (1.center) to (2.center);
\draw [style=simple] (1.center) to (3.center);
\draw [style=simple] (1.center) to (0.center);
\draw [style=simple] (0.center) to (4.center);

\end{tikzpicture}
\hspace{1.2cm}
\begin{tikzpicture}[scale=0.65]
\tikzset{VertexStyle/.style = {shape = circle,fill = white, minimum size = 0pt,inner sep = 0pt, opacity = 0, text opacity = 1}}

\node [style=simpletoo] (0) at (-2, 2) {};
\node [style=simpletoo] (1) at (-3, -0) {};
\node [style=simpletoo] (2) at (-4, -2) {};
\node [style=simpletoo] (3) at (-2, -2) {};
\node [style=simpletoo] (4) at (-1, 4) {};
\node [style=none] (5) at (-1.5, 4) {$\beta$};
\node [style=none] (6) at (-2.5, 2) {$i_1'$};
\node [style=none] (7) at (-3.5, -0) {$i_1'$};
\node [style=none] (8) at (-4, -2.5) {$\alpha$};
\node [style=none] (9) at (-2, -2.5) {$i_1$};
\node [style=simpletoo] (10) at (-2.5, 1) {};
\node [style=simpletoo] (11) at (0, 3) {};
\node [style=none] (12) at (-2, 1) {$O$};
\node [style=none] (13) at (0.5, 3) {$i_1$};
\node [style=none] (24) at (-2.5, -4) {\textbf{Case 2}};
\draw [style=simple] (1.center) to (2.center);
\draw [style=simple] (1.center) to (3.center);
\draw [style=simple] (0.center) to (4.center);
\draw [style=simple] (0.center) to (11.center);
\draw [style=simple] (0.center) to (10.center);
\draw [style=simple] (10.center) to (1.center);
  
\end{tikzpicture}
\hspace{0.4cm}
\begin{tikzpicture}[scale=0.65]
\tikzset{VertexStyle/.style = {shape = circle,fill = white, minimum size = 0pt,inner sep = 0pt, opacity = 0, text opacity = 1}}

\node [style=simpletoo] (0) at (-2, 2) {};
\node [style=simpletoo] (1) at (-3, -0) {};
\node [style=simpletoo] (2) at (-5, -1) {};
\node [style=simpletoo] (3) at (-4, -2) {};
\node [style=simpletoo] (4) at (-1, 4) {};
\node [style=none] (5) at (-1.5, 4) {$O$};
\node [style=none] (6) at (-2.5, 2) {$O'$};
\node [style=none] (7) at (-3.5, 0.25) {$i_5'$};
\node [style=none] (8) at (-5.5, -1) {$\alpha$};
\node [style=none] (9) at (-4, -2.5) {$i_5$};
\node [style=simpletoo] (10) at (0, 1) {};
\node [style=none] (11) at (0.5, 1) {$i_3$};
\node [style=simpletoo] (12) at (-1, -0) {};
\node [style=simpletoo] (13) at (-1, -2) {};
\node [style=simpletoo] (14) at (1, -1) {};
\node [style=none] (15) at (-0.5, 0.25) {$i_4'$};
\node [style=none] (16) at (1.5, -1) {$\beta$};
\node [style=none] (17) at (-1, -2.5) {$i_4$};
\node [style=simpletoo] (18) at (-1.5, 3) {};
\node [style=none] (19) at (-2, 3) {$i_2'$};
\node [style=simpletoo] (20) at (0.5, 2) {};
\node [style=none] (21) at (1, 2) {$i_2$};
\node [style=simpletoo] (22) at (1, 3) {};
\node [style=none] (23) at (1.5, 3) {$i_1$};
\node [style=none] (24) at (-2, -4) {\textbf{Case 3}};
\draw [style=simple] (1.center) to (2.center);
\draw [style=simple] (1.center) to (3.center);
\draw [style=simple] (0.center) to (10.center);
\draw [style=simple] (12.center) to (13.center);
\draw [style=simple] (12.center) to (14.center);
\draw [style=simple] (12.center) to (0.center);
\draw [style=simple] (0.center) to (1.center);
\draw [style=simple] (0.center) to (18.center);
\draw [style=simple] (18.center) to (4.center);
\draw [style=simple] (18.center) to (20.center);
\draw [style=simple] (22.center) to (4.center);
  
\end{tikzpicture}
\caption{The configurations of vertices $\alpha$, $\beta$, and $O$ relevant for the proof of Theorem~\ref{thmnorm}.}
\label{fig2}
\end{figure}

\begin{proof}[Proof (Theorem \ref{thmnorm}).]
The normalization condition we want to check is
\be
\label{recoverthis}
\langle\phi_{\mu,\alpha} | \phi_{\nu,\beta} \rangle = \frac{1}{1-p}\lb \delta_{\mu^*+\nu} + \delta_{\mu^*+\nu-1}\rb \delta_{\alpha,\beta} + \frac{p^{\ell(O,O')}-1}{p-1} \delta_{\mu^*+\nu-1}.
\ee
Pick two eigenfunctions $\phi_{\mu,\alpha}$, $\phi_{\nu,\beta}$. The eigenfunction inner product is
\be
\langle \phi_{\mu,\alpha} | \phi_{\nu,\beta} \rangle = \sum_{i\in V(T)} p^{\mu^*\langle i,\alpha \rangle + \nu\langle i,\beta \rangle}.
\ee
There are three cases, as in Figure \ref{fig2}, and we use Lemma \ref{lemma1}.\\
\textbf{Case 1.}
In this case $\alpha=\beta$. We have
\ba
\lb \mu^* + \nu\rb \langle i_1,\alpha \rangle &=& - \lb \mu^* + \nu\rb \ell\lb O,i_1 \rb, \\
\lb \mu^* + \nu\rb \langle i_2,\alpha \rangle &=& \lb \mu^* + \nu\rb \lsb \ell \lb O,i_2' \rb - \ell\lb i_2,i_2' \rb  \rsb,
\ea
so that
\ba
\langle \phi_{\mu,\alpha} | \phi_{\nu,\beta} \rangle &=& \sum_{\ell(O,i_1)=0}^\infty p^{\lb 1-\mu^*-\nu\rb \ell \lb O,i_1 \rb } + \lb \sum_{\ell(O,i_2')=1}^\infty p^{\lb\mu^*+\nu\rb\ell(O,i_2')} \rb\times\nn\\
& &\times\lsb 1 + \frac{p-1}{p} \sum_{\ell\lb i_2,i_2'\rb=1}^\infty  p^{\lb 1-\mu^*-\nu\rb \ell\lb i_2,i_2'\rb} \rsb.
\ea
There are two geometric sums that we need to consider,
\ba
\sum_{i=0}^\infty p^{(1-\mu^*-\nu)i} &=& 1 + \frac{1}{p^{\mu^*+ \nu -1}-1}, \qquad \Re\lb \mu^*+ \nu  \rb > 1, \\
\sum_{i=1}^\infty p^{\lb\mu^*+\nu\rb i} &=&  -1 + \frac{1}{1-p^{\mu^*+\nu }}, \qquad \Re\lb \mu^* + \nu  \rb <0.
\ea

Next, since $\mu,\nu$ are either real, or lie on the critical line, we need only consider the poles of the right hand side of the above two equations at $\mu^*+ \nu=1$ and $\mu^*+ \nu=0$. 

Note that the ranges over which these two series converge do not overlap. However, as explained in Definition \ref{def6}, we introduce analytically continued values as
\ba
\sum_{i=0}^\infty p^{(1-\mu^*-\nu)i} &=& 1 + \frac{1}{p^{\mu^*+ \nu -1}-1}, \qquad  \mu^*+ \nu  \neq 1, \\
\sum_{i=1}^\infty p^{\lb\mu^*+\nu\rb i} &=&  -1 + \frac{1}{1-p^{\mu^*+\nu }}, \qquad  \mu^* + \nu \neq 0.
\ea

Furthermore, we have
\ba
\sum_{i=0}^\infty p^{(1-\mu^*-\nu)i} &=& 1 + \sum_{i=1}^\infty 1 \eqqcolon \zeta_0, \qquad \mu^*+ \nu  = 1,\\
\sum_{i=1}^\infty p^{\lb\mu^*+\nu\rb i} &=& \sum_{i=1}^\infty 1 = \zeta_0, \qquad \mu^*+ \nu  = 0,
\ea
where quantity $\zeta_0$ is formally divergent. This is just $1+1+1+1+\dots$, and can be regularized by 
\be
\label{zetareg}
\sum_{i=1}^\infty 1^i = \zeta\lb0\rb=-1/2,
\ee
however this regularization will not be needed for the first case. This is because when putting everything together $\zeta_0$ cancels, as do the terms not proportional to delta functions, and we obtain
\be
\langle \phi_{\mu,\alpha} | \phi_{\nu,\beta} \rangle = \frac{1}{1-p} \lb \delta_{\mu^*+\nu} + \delta_{\mu^*+\nu-1}\rb,
\ee
with $\delta_\star$ the Kronecker delta function.

\textbf{Case 2.} We have
\ba
\langle i_1,\alpha \rangle &=& - \ell\lb O,i_1 \rb, \\
\langle i_1,\beta\rangle &=& \ell\lb O,i_1' \rb - \ell\lb i_1,i_1' \rb, \\
\langle i_2,\alpha\rangle &=& \langle i_2,\beta\rangle = -\ell\lb O,i_2 \rb,\\
\langle i_3,\beta \rangle &=& - \ell\lb O,i_3 \rb, \\
\langle i_3,\beta\rangle &=& \ell\lb O,i_3' \rb - \ell\lb i_3,i_3' \rb.
\ea
The normalization then is
\ba
\langle \phi_{\mu,\alpha} | \phi_{\nu,\beta} \rangle &=& \lb \sum_{\ell\lb O,i_1' \rb=1}^\infty p^{\lb-\mu^*+\nu\rb\ell\lb O,i_1' \rb} \rb \lb 1 +\frac{p-1}{p} \sum_{\ell\lb i_1,i_1' \rb=1}^\infty p^{\lb1-\mu^*-\nu\rb\ell\lb i_1,i_1' \rb} \rb \nn \\
& & + \lb 1 +\frac{p-1}{p} \sum_{\ell\lb O,i_2 \rb=1}^\infty p^{\lb 1-\mu^*-\nu\rb\ell\lb O,i_2 \rb} \rb + \\
& &\lb \sum_{\ell\lb O,i_3' \rb=1}^\infty p^{\lb\mu^*-\nu\rb\ell\lb O,i_3' \rb} \rb \lb 1 +\frac{p-1}{p} \sum_{\ell\lb i_3,i_3' \rb=1}^\infty p^{\lb1-\mu^*-\nu\rb\ell\lb i_3,i_3' \rb} \rb. \nn
\ea
We need to regularize these geometric sums; as above, we use
\be
\label{geomreg}
\sum_{i=0}^\infty p^{-\alpha i} = 1 - \frac{1}{1-p^{-\alpha}}, \quad \alpha\neq 0.
\ee
There are now several subcases to consider:
\begin{itemize}
\item If $\mu^*\neq \nu$, $\mu^*+\nu\neq 1$, direct computation gives $\langle\phi_{\mu,\alpha} | \phi_{\nu,\beta} \rangle=0$.
\item If $\mu^*=\nu$, $\mu^*+\nu\neq 1$, we have
\be
\langle\phi_{\mu,\alpha} | \phi_{\nu,\beta} \rangle = \frac{(2 \zeta_0+1) \left(p^{2 \nu}-1\right)}{p^{2 \nu }-p},
\ee
which vanishes with Zeta regularization \eqref{zetareg}.
\item If $\mu^*\neq \nu$, $\mu^*+\nu=1$, by direct computation $\langle\phi_{\mu,\alpha} | \phi_{\nu,\beta} \rangle=0$.
\item If $\mu^*=\nu=1/2$, then 
\be
\langle\phi_{\mu,\alpha} | \phi_{\nu,\beta} \rangle = \frac{(2 \zeta_0+1) \lsb\zeta_0 (p-1)+p\rsb}{p},
\ee
which again vanishes with Zeta regularization \eqref{zetareg}.
\end{itemize}
Putting everything together, the answer in case 2 is
\be
\langle \phi_{\mu,\alpha} | \phi_{\nu,\beta} \rangle = 0
\ee
for all values of $\mu$ and $\nu$.

\textbf{Case 3.}
We use Eq. \eqref{geomreg} for regularization. Assume there are $\mathcal{N}$ vertices between $O$ and $O'$ (distinct from $O$ and $O'$), so that $\ell(O,O')=\NN+1$. The sums are
\ba
\langle\phi_{\mu,\alpha} | \phi_{\nu,\beta} \rangle &=& \sum_{\ell\lb O,i_1 \rb=0}^\infty p^{\lb 1-\mu^*-\nu \rb \ell\lb O,i_1 \rb } \nn \\
& & + \sum_{\ell\lb O,i_2' \rb=1}^\NN p^{\lb \mu^* + \nu \rb \ell\lb O, i_2' \rb } \lsb 1 + \frac{p-1}{p} \sum_{\ell\lb i_2,i_2' \rb=1}^\infty p^{ \lb 1 -\mu^* - \nu \rb \ell\lb i_2,i_2' \rb } \rsb \nn \\
& & + p ^{(\mu^*+\nu)\ell\lb O,O' \rb}\lsb 1 + \frac{p-2}{p} \sum_{\ell\lb O',i_3 \rb=1}^\infty p^{ \lb 1 -\mu^* - \nu \rb \ell\lb O',i_3 \rb } \rsb  \\
& & + p^{\lb \mu^*+\nu  \rb \ell\lb O,O' \rb} \sum_{\ell\lb O',i_5' \rb = 1}^\infty \lb p^{\lb\mu^*-\nu\rb\ell\lb O',i_5' \rb} + p^{\lb\nu-\mu^*\rb\ell\lb O',i_5' \rb} \rb \times \nn\\
& &\times \lsb 1 + \frac{p-1}{p} \sum_{\ell\lb i_5,i_5' \rb=1}^\infty p^{\lb 1-\mu^*-\nu\rb \ell\lb i_5,i_5' \rb } \rsb.\nn 
\ea

There are now several subcases to consider:
\begin{itemize}
\item If $\mu^*+\nu\neq 1$, $\mu^* + \nu \neq 0$, $\mu^*\neq \nu$, direct computation gives $\langle\phi_{\mu,\alpha} | \phi_{\nu,\beta} \rangle = 0$.
\item If $\mu^*+\nu =1$, $\mu^*\neq \nu$, we obtain
\be
\langle\phi_{\mu,\alpha} | \phi_{\nu,\beta} \rangle = \frac{p^{\NN+1}-1}{p-1}.
\ee
\item If $\mu^*+\nu =1$, $\mu^*= \nu=1/2$, we obtain, using Zeta function regularization \eqref{zetareg},
\be
\langle\phi_{\mu,\alpha} | \phi_{\nu,\beta} \rangle = \frac{p^{\NN+1}-1}{p-1}.
\ee
\item If $\mu^*+\nu=0$ (including $\mu^*=\nu=0)$, we obtain $\langle\phi_{\mu,\alpha} | \phi_{\nu,\beta} \rangle = 0$ by direct computation.
\item If $\mu^*=\nu$, $\mu^*,\nu\neq0,1/2$, then
\be
\langle\phi_{\mu,\alpha} | \phi_{\nu,\beta} \rangle = \frac{\left(p^{\mu^*+\nu}-1\right) p^{(\NN+1) (\mu^*+\nu)}(2 \zeta_0+1)}{p^{\mu^*+\nu }-p},
\ee
which vanishes using Eq. \eqref{zetareg}.
\end{itemize}

\textbf{Summary.} Putting Cases 1, 2, and 3 together, we have obtained
\be
\label{hereisnorm}
\langle\phi_{\mu,\alpha} | \phi_{\nu,\beta} \rangle = \frac{1}{1-p} \lb \delta_{\mu^*+\nu} + \delta_{\mu^*+\nu-1}\rb \delta_{\alpha,\beta} + \frac{p^{\ell(O,O')}-1}{p-1} \delta_{\mu^*+\nu-1},
\ee
which recovers Eq. \eqref{recoverthis}.
\end{proof}

\section{Vanishing of the real line contributions}
\label{appvanishing}

In this appendix we show that the real line contributions to the two-point function path integral vanish, if one uses the regularization in Definition \ref{prescription8}. We are interested in the quantity
\ba
\label{eqisthisvanishing}
\langle X_i^a X_j^b \rangle_{D_\mrm{real}} &=&  \frac{1}{Z[0]} \int Dc \exp \lcb - \frac{i}{Va^2} \int_{\mu,\nu\in D_2}  \sum_{k\in V(\MM)} \Lambda_\nu c^m_\mu c^m_\nu \phi_\mu(k) \phi_\nu(k) \rcb \times\nn\\
& & \times\int_{\mu_{1,2}\in D_\mrm{real}} c^a_{\mu_1} c^b_{\mu_2} \phi_{\mu_1}(i) \phi_{\mu_2}(j).
\ea
We compute the $k$ sum first; using results from Appendix \ref{appA}, we have (note we are using $0\leq \mu,\nu\leq 1$ to remove one of the delta functions)
\ba
\langle X_i^a X_j^b \rangle_{D_\mrm{real}} &=&  \frac{1}{Z[0]} \int Dc \exp \lcb - \frac{i\delta t}{Va^2} \frac{1}{1-p} \int_{\nu\in D_2} \Lambda_\nu c^m_{1-\nu} c^m_\nu \rcb \times\nn\\
& & \times\int_{\mu_{1,2}\in D_\mrm{real}} c^a_{\mu_1} c^b_{\mu_2} \phi_{\mu_1}(i) \phi_{\mu_2}(j).
\ea
The only possible nonzero contributions thus are (indices $a$ and $b$ are not summed)
\ba
\langle X_i^a X_j^b \rangle_{D_\mrm{real}} &=&  \frac{\delta^{ab}}{Z[0]} \int Dc \exp \lcb - \frac{i\delta t}{Va^2} \frac{1}{1-p} \int_{\mu\in D_2} \Lambda_{\nu} c^m_{\nu}c^m_{1-\nu} \rcb \times\nn\\
\label{expB3}
& & \times\int_{\nu\in D_\mrm{real}} \lsb \lb c^a_{\mu}\rb^2 + c^a_{\mu}c^a_{1-\mu} + \lb c^a_{1-\mu}\rb^2 \rsb \phi_{\mu}(i) \phi_{\mu}(j).
\ea
Note that on the real line we are not identifying the complex conjugate of $c_\mu^m$ with $c_{1-\mu}^m$, so the expression above cannot be written in terms of absolute values.

Consider now the integral of the middle term; it is of the type
\be
I_\mrm{mid}\coloneqq \int_{c_\mu,c_{1-\mu}} e^{ \Lambda c_\mu c_{1-\mu}}  c_\mu c_{1-\mu} dc_\mu dc_{1-\mu}.
\ee
We perform a variable change to 
\be
s\coloneqq c_\mu, \quad t \coloneqq c_{\mu} c_{1-\mu},
\ee
so that the integral becomes
\be
\label{Imid}
I_\mrm{mid} = \int_{s,t}e^{\Lambda t} s^{-1} t ds dt,
\ee
with the factor of $s^{-1}$ coming from the Jacobian. The integrals over $s$ and $t$ now decouple. With the regularization in Definition \ref{prescription8}, dropping the divergent terms, we have
\be
\int_{-\infty}^0 e^{\Lambda t} t dt= -\frac{1}{\Lambda^2}, \quad  \int_{0}^\infty e^{\Lambda t} t dt= \frac{1}{\Lambda^2},
\ee
so that Eq. \eqref{Imid} vanishes. Similarly, the other two terms in Expression \eqref{expB3} are of the type
\be
\int e^{\Lambda t} t^2 dt,
\ee
so using Definition \ref{prescription8} one more time we have
\be
\int_{-\infty}^0 e^{\Lambda t} t^2 dt= \frac{2}{\Lambda^3}, \quad \int_{0}^\infty e^{\Lambda t} t^2 dt= -\frac{2}{\Lambda^3},
\ee
so that Eq. \eqref{eqisthisvanishing} vanishes.

\end{spacing}

\end{document}